\newtheorem{theorem}{Theorem}
\newtheorem{lemma}{Lemma}
\newtheorem{corollary}{Corollary}
\DeclareMathOperator{\E}{E}
\newcommand{\algline}{\rule{\textwidth}{0.2mm}}
\title{Lazy ABC}
\author{Dennis Prangle\footnote{University of Reading.  Email \href{mailto:d.b.prangle@reading.ac.uk}{\nolinkurl{d.b.prangle@reading.ac.uk}}}}
\date{}
\begin{document}

\maketitle

\begin{abstract}
Approximate Bayesian computation (ABC) performs statistical inference for otherwise intractable probability models by accepting parameter proposals when corresponding simulated datasets are sufficiently close to the observations.  Producing the large quantity of simulations needed requires considerable computing time.  However, it is often clear before a simulation ends that it is unpromising: it is likely to produce a poor match or require excessive time.  This paper proposes lazy ABC, an ABC importance sampling algorithm which saves time by sometimes abandoning such simulations.  This makes ABC more scalable to applications where simulation is expensive.  By using a random stopping rule and appropriate reweighting step, the target distribution is unchanged from that of standard ABC.  Theory and practical methods to tune lazy ABC are presented and illustrated on a simple epidemic model example. They are also demonstrated on the computationally demanding spatial extremes application of \cite{Erhardt:2012}, producing efficiency gains, in terms of effective sample size per unit CPU time, of roughly 3 times for a 20 location dataset, and 8 times for 35 locations.
\end{abstract}
\noindent{\bf Keywords}: importance sampling, ABC, unbiased likelihood estimators, epidemics, spatial extremes

\section{Introduction}

Approximate Bayesian computation (ABC) algorithms are a popular method of inference for a wide class of otherwise intractable probability models in applications such as population genetics, ecology, epidemiology and systems biology \citep{Beaumont:2010, Marin:2012}.  They select parameter vectors $\theta$ for which datasets $y$ simulated from the model of interest are sufficiently close to the observations.  A bottleneck is the computational cost of producing the large quantity of model simulations needed, which becomes increasingly severe for more detailed models.  However, it is often clear during a simulation that it is unpromising.  For example it is likely to produce a poor match or to require excessive computation time.  This paper presents \emph{lazy ABC}, an importance sampling method which abandons some such simulations, a step referred to as \emph{early stopping}, exploiting information from incomplete simulations to save time.  The result is an ABC algorithm which is more scalable to applications where simulation is computationally demanding. 

In more detail, standard ABC is based on a random likelihood estimator $\hat{L}_{\text{ABC}}$, which is 1 for a close match of simulated and observed data, and zero otherwise.  The algorithm can be shown to target a distribution corresponding to the approximate likelihood $L_{\text{ABC}}(\theta) = \E[\hat{L}_{\text{ABC}} | \theta]$.  Lazy ABC is based on an alternative estimator $\hat{L}_{\text{lazy}}$.  This equals zero with probability $1-\alpha$ -- if early stopping is performed -- and otherwise equals the $\hat{L}_{\text{ABC}}$ estimator multiplied by a weight.  Letting the weight equal $1/\alpha$ makes $\hat{L}_{\text{lazy}}$ an unbiased estimator of $L_{\text{ABC}}(\theta)$.  Results on random likelihood estimates show that importance sampling 
algorithms based on $\hat{L}_{\text{lazy}}(\theta)$ therefore target the same distribution as standard ABC.  No further approximation has been introduced.

The lazy ABC estimator trades off an increase in variance for a reduction in computation time.  It is shown that for this to be most advantageous $\alpha$ should be (1) larger when there is a high probability of the simulated dataset being a close match to the observations (2) smaller when the expected time to complete the simulation is large.  To achieve this, $\alpha$ is based on $X$, a random variable encapsulating some preliminary information about the simulated dataset $Y$, also a random variable.  The final likelihood estimator is based on $X$ and $Y$.  However when early stopping occurs a realised value of zero is obtained without drawing a value of $Y$. This is an example of the computer science concept of \emph{lazy evaluation} \citep[``functional programming'' entry]{Ralston:2003}, which is the basis for the method's name.

The paper presents theoretical results on the optimal tuning of $\alpha$ in lazy ABC, making precise the two properties just outlined.  This choice is asymptotically optimal in terms of maximising efficiency, which is defined as effective sample size (ESS) per unit CPU time.  Based on this, a framework for tuning in practice is also presented.  The main requirements are the estimation of the probability of close matches and of expected remaining computation times.  Both of these are conditional on $\theta$ and $x$ (realised values of $X$).  As $x$ is typically high dimensional this estimation is not feasible, so instead it is recommended to base the choice of $\alpha$ on a vector of low dimensional \emph{decision statistics} $\phi(\theta, x)$.  A computationally demanding example is presented based on the spatial extremes application of \cite{Erhardt:2012} where lazy ABC increases the efficiency by roughly 3 times for modestly sized data and 8 times for a larger example.

It should be emphasised that this approach to tuning is \emph{optional}.
In many applications a simple ad-hoc choice of $\alpha$ may produce significant efficiency gains,
especially where application knowledge can be used to create heuristic criteria which quickly identify many unpromising simulations.

The focus of this paper is on importance sampling, which is widely used by ABC practitioners and particularly amenable to parallelisation.  However the lazy ABC approach is also applicable to other algorithms, such as Markov chain Monte Carlo (MCMC) and sequential Monte Carlo (SMC), as discussed in the final section.

Several recent papers have proposed speeding up ABC by fitting a model to $(\theta, y)$ pairs, simulated either in a preliminary stage or in earlier ABC iterations \citep{Buzbas:2013, Meeds:2014, Moores:2014, Wilkinson:2014}.  This model is then sometimes or always used in place of the original model of interest in the inference algorithm.  A potential application of lazy ABC is to make use of such approximate models (their predictions given $\theta$ forming $X$ in the notation above) to gain speed benefits without incurring additional approximation errors.  More generally, there has been much interest over the past decade in Bayesian inference algorithms with random weights \citep[e.g.][]{Beaumont:2003, Andrieu:2009, Fearnhead:2010, Tran:2014}.  A novelty of lazy ABC is that it introduces a random factor to the weights to reduce computation time, rather than to deal with intractability.  


Finally, a byproduct of the paper's theory, Corollary \ref{cor:ABC-IS}, is of independent interest.
This gives importance densities which optimise asymptotic ESS per unit CPU time for:
(a) ABC importance sampling (b) importance sampling with random weights.

The remainder of the paper is structured as follows.  Section \ref{sec:background} contains background material on ABC and importance sampling.  
Section \ref{sec:ESalgorithm} gives the lazy ABC algorithm and proves it targets the correct distribution.  Section \ref{sec:tuning} presents theory and practical methods for tuning the algorithm, as well as the corollary mentioned above.
Section \ref{sec:epidemics} illustrates the method for a simple epidemic model.
R code to implement this is available at
\texttt{https://github.com/dennisprangle/lazyABCexample}.
Section \ref{sec:extremes} contains a more challenging spatial extremes application and
Section \ref{sec:discussion} is a discussion.
Appendices contain proofs and discuss some extensions:
lazy ABC with multiple stopping decisions
and application of similar methods outside ABC.

\section{Importance sampling} \label{sec:background}

Consider analysing data $y_{\text{obs}}$ under a probability model with density $\pi(y|\theta)$ and parameters $\theta$.  The likelihood is defined as $L(\theta) = \pi(y_{\text{obs}}|\theta)$.  Bayesian inference introduces a prior distribution with density $\pi(\theta)$ and aims to find the posterior distribution $\pi(\theta | y_{\text{obs}}) = \pi(\theta) L(\theta) / \pi(y_{\text{obs}})$, where $\pi(y_{\text{obs}}) = \int \pi(\theta) L(\theta) d\theta$, or at least to estimate the posterior expectation $\E[h(\theta) | y_{\text{obs}}]$ of a generic function $h(\theta)$.  Importance sampling is a simple method to do this.  Parameter values $\theta_{1:N}$ are simulated independently from an importance density $g(\theta)$ and given weights $w_i = L(\theta_i) \pi(\theta_i) / g(\theta_i)$ (n.b.~$\theta_{1:N}$ represents the sequence $(\theta_i)_{1 \leq i \leq N}$.  Similar notation is used later.)  It is assumed throughout that $g(\theta)>0$ whenever $\pi(\theta)>0$.  Each of the $(\theta_i, w_i)$ pairs can be computed in parallel, allowing for efficient implementation.

A Monte Carlo estimate of $\E[h(\theta) | y_{\text{obs}}]$ is $\mu_h = \frac{\sum_{i=1}^N h(\theta_i) w_i}{\sum_{i=1}^N w_i}$.
Two properties of importance sampling estimates are
\begin{align}
&\mu_h \to \E[h(\theta) | y_{\text{obs}}] \quad \text{almost surely as } N \to \infty, \label{eq:IS1} \\
&E[N^{-1} \sum_{i=1}^N w_i] = \pi(y_{\text{obs}}). \label{eq:IS2}
\end{align}
See \cite{Geweke:1989} for proof that \eqref{eq:IS1} holds under weak conditions.  To prove \eqref{eq:IS2} note that each $w_i$ is an unbiased estimator of $\pi(y_{\text{obs}})$.  Estimating this is of interest for model comparison.  

\subsection{Notation}

The remainder of the paper is largely concerned with the distribution of random variables produced in an iteration of various importance sampling algorithms.  Henceforth, expectations and probabilities that involve quantities produced by importance sampling should be read as being with respect to this distribution.  In particular this means that below the marginal density of $\theta$ is taken to be $g(\theta)$.  The preceding material in this section is the only time that a marginal density of $\pi(\theta)$ is used instead.

\subsection{Random weights} \label{sec:randomweightIS}

Algorithm \ref{alg:RW-IS} describes random weight importance sampling (RW-IS), an importance sampling algorithm in which likelihood evaluations are replaced with random estimates of the likelihood.  Under the condition that these estimates are non-negative and unbiased, the algorithm produces valid output, in the sense that \eqref{eq:IS1} and \eqref{eq:IS2} continue to hold.  This can be seen by noting that Algorithm \ref{alg:RW-IS} is equivalent to a deterministic weight importance sampling algorithm with augmented parameters $(\theta, \ell)$, prior density $\pi(\theta) \pi(\ell| \theta)$, importance density $g(\theta) \pi(\ell| \theta)$ and likelihood $\ell$.  Here $\ell$ is the realisation of the likelihood estimator and $\pi(\ell| \theta)$ is the conditional density of this estimator.  This algorithm gives the correct marginal posterior for $\theta$.
See  \cite{Tran:2014} for a more detailed proof and \cite{Fearnhead:2010} for further background
(including the observation that the non-negativity condition above can be removed.)

\begin{algorithm}[htp]

\algline \\
{\bf Input:}
\begin{itemize}
\item Prior density $\pi(\theta)$ and importance density $g(\theta)$.
\item Number of iterations to perform $N$.
\item Likelihood estimator $\hat{L}$.
\end{itemize}
Repeat the following steps $N$ times.
\begin{enumerate}
\item[1] Simulate $\theta^*$ from $g(\theta)$.
\item[2] Simulate $\ell^*$ from $\hat{L} | \theta$.
\item[3] Set $w^* = \ell^* \pi(\theta^*) / g(\theta^*)$.
\end{enumerate}
{\bf Output:}\\ A set of $N$ pairs of $(\theta^*, w^*)$ values.\\
\algline
\caption{Random weight importance sampling (RW-IS) \label{alg:RW-IS}}
\end{algorithm}

\begin{algorithm}[htp]

\algline \\
{\bf Input:}
\begin{itemize}
\item Prior density $\pi(\theta)$ and importance density $g(\theta)$.
\item Number of iterations to perform $N$.
\item Observed data $y_{\text{obs}}$.
\item Summary statistics $s(\cdot)$, distance function $d(\cdot,\cdot)$ and threshold $\epsilon \geq 0$.
\end{itemize}
{\bf Algorithm:}\\
Repeat the following steps $N$ times.
\begin{enumerate}
\item[1] Simulate $\theta^*$ from $g(\theta)$.
\item[2] Simulate $y^*$ from $Y | \theta^*$
\item[3] Set $\ell^* = \mathbbm{1}[d(s(y^*), s(y_{\text{obs}})) \leq \epsilon]$.
\item[4] Set $w^* = \ell^* \pi(\theta^*) / g(\theta^*)$.
\end{enumerate}
{\bf Output:}\\ A set of $N$ pairs of $(\theta^*, w^*)$ values.\\
\algline
\caption{ABC importance sampling (ABC-IS) \label{alg:ABC-IS}}
\end{algorithm}

\subsection{Approximate Bayesian computation}

Many interesting models are sufficiently complicated that it is not feasible to calculate exact likelihoods or useful (i.e.~reasonably low variance) unbiased estimators.  ABC algorithms instead base inference on simulation from the model.
Algorithm \ref{alg:ABC-IS} (ABC-IS) is a standard importance sampling implementation of this idea.

For later theoretical work, it is helpful to interpret ABC-IS in the framework of RW-IS.
In particular, Algorithm \ref{alg:ABC-IS} can be obtained from Algorithm \ref{alg:RW-IS} by replacing the likelihood estimator $\hat{L}$ with a Bernoulli estimator 
\begin{equation} \label{eq:LhatABC}
\hat{L}_{\text{ABC}} = \mathbbm{1}[d(s(Y), s(y_{\text{obs}})) \leq \epsilon].
\end{equation}
This equals 1 if the distance between summary statistics of the simulated and observed datasets is less than or equal to a threshold $\epsilon$.
It is typically a biased estimate of the likelihood and ABC-IS targets an approximate likelihood $L_{\text{ABC}}(\theta) = E[\hat{L}_{\text{ABC}} | \theta]$.

A special case of ABC-IS is when $g(\theta)=\pi(\theta)$.  The weights in this case equal zero or one, and it is often referred to as ABC rejection sampling.  A generalisation of ABC-IS, considered in Section \ref{sec:discussion}, is to use as a likelihood estimator $K(d(s(Y), s(y_{\text{obs}}))/\epsilon)$, where $K$ is a density function known as the ABC kernel.  Algorithm \ref{alg:ABC-IS} uses a uniform kernel.

As $\epsilon \to 0$, the target distribution of ABC-IS converges to $\pi(\theta | s(y_{\text{obs}}))$.  However, $\epsilon>0$ is typically required to achieve a reasonable number of non-zero weights, so a trade-off must be made.  The observed summary statistics $s(y_{\text{obs}})$ should ideally preserve most of the information on $\theta$ available from $y_{\text{obs}}$.  However analysis of ABC algorithms shows that the quality of the approximation deteriorates with the dimension of $s(y)$.  Therefore the choice of $s(\cdot)$ involves a trade-off between low dimension and informativeness.  For further background details on all aspects of ABC see the review articles of \cite{Beaumont:2010} and \cite{Marin:2012}.

\section{Lazy ABC} \label{sec:ESalgorithm}

Lazy ABC is Algorithm \ref{alg:lazyABC}. Given proposed parameters $\theta^*$, step 2 performs an initial part of data simulation, with results $x^*$. A continuation probability $\alpha(\theta^*, x^*)$ is calculated. With this probability the simulation is completed, resulting in $y^*$, and a weight is calculated by steps 5 and 6. Otherwise the iteration is given weight zero. The desired behaviour is that simulating $x^*$ is computationally cheap but can be used to quickly reject many unpromising importance sampling iterations.

\begin{algorithm}[htp]

\algline \\
{\bf Input:}
\begin{itemize}
\item Prior density $\pi(\theta)$ and importance density $g(\theta)$.
\item Number of iterations to perform $N$.
\item Observed data $y_{\text{obs}}$.
\item Summary statistics $s(\cdot)$, distance function $d(\cdot,\cdot)$ and threshold $\epsilon \geq 0$.
\item Continuation probability function $\alpha(\theta,x)$ taking values in $[0,1]$.
\item Random variable $X$ representing an initial stage of simulation.
\end{itemize}
{\bf Algorithm:}\\
Repeat the following steps $N$ times.
\begin{enumerate}
\item[1] Simulate $\theta^*$ from $g(\theta)$.
\item[2] Simulate $x^*$ from $X | \theta^*$ and let $a^* = \alpha(\theta^*, x^*)$.
\item[3] With probability $a^*$ continue to step 4. Otherwise perform \emph{early stopping}: let $\ell^*=0$ and go to step 6.
\item[4] Simulate $y^*$ from $Y | \theta^*, x^*$.
\item[5] Set $\ell^*_{\text{ABC}} = \mathbbm{1}[d(s(y^*), s(y_{\text{obs}})) \leq \epsilon]$ and $\ell^* = \ell^*_{\text{ABC}} / a^*$.
\item[6] Set $w^* = \ell^* \pi(\theta^*) / g(\theta^*)$.
\end{enumerate}
{\bf Output:}\\ A set of $N$ pairs of $(\theta^*, w^*)$ values.\\
\algline
\caption{Lazy ABC \label{alg:lazyABC}}
\end{algorithm}

The following conditions are required for Algorithm \ref{alg:lazyABC} to be well defined:
\begin{itemize}
\item[C1] $\alpha(\theta,x) > 0$ whenever $\Pr(\hat{L}_{\text{ABC}}>0 | \theta, x) > 0$
\item[] (recall $\hat{L}_{\text{ABC}}$ is defined by \eqref{eq:LhatABC})
\item[C2] The random variable $X$ is such that both $X | \theta$ and $Y | \theta, x$ can be simulated from.
\end{itemize}

Lazy ABC and ABC-IS both perform importance sampling inference for the same approximate likelihood $L_{\text{ABC}}(\theta)$.
To see this first observe that Algorithm \ref{alg:lazyABC} can be interpreted as a RW-IS algorithm using a likelihood estimator of the form
\begin{equation} \label{eq:es}
\hat{L}_{\text{lazy}} = \begin{cases} 
  \hat{L}_{\text{ABC}} / \alpha(\theta, X) & \text{with probability } \alpha(\theta, X) \\
  0 & \text{otherwise}
\end{cases}
\end{equation}
By the arguments of Section \ref{sec:background}, to show that lazy ABC and ABC-IS target the same likelihood it is sufficient to prove the following result.

\begin{theorem} \label{thm:es}
Conditional on $\theta$, $\hat{L}_{\text{lazy}}$ is a non-negative unbiased estimator of $L_{\text{ABC}}(\theta)$.
\end{theorem}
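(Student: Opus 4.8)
The plan is to verify the two claimed properties separately. Non-negativity is immediate: $\hat{L}_{\text{ABC}}$ is an indicator and hence takes values in $\{0,1\}$, while condition C1 guarantees that the continuation probability $\alpha(\theta, X)$ is strictly positive on any event where $\hat{L}_{\text{ABC}}$ can be nonzero. Thus the ratio $\hat{L}_{\text{ABC}}/\alpha(\theta, X)$ in the first branch of \eqref{eq:es} is well defined and non-negative whenever it is actually realized, while the second branch equals zero; so $\hat{L}_{\text{lazy}} \geq 0$ always.

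The substance is the unbiasedness claim, which I would establish by iterated expectation, conditioning first on the auxiliary variable $X$. The key step is to compute $\E[\hat{L}_{\text{lazy}} | \theta, X]$ by exploiting the conditional independence built into the algorithm: given $(\theta, X)$, the early-stopping decision (a Bernoulli variable $B$ with success probability $\alpha(\theta, X)$) is drawn independently of the completed simulation $Y | \theta, X$ that determines $\hat{L}_{\text{ABC}}$. Writing $\hat{L}_{\text{lazy}} = B \, \hat{L}_{\text{ABC}} / \alpha(\theta, X)$ and factoring the expectation of this product over the two independent pieces, the factor $\alpha(\theta, X) = \E[B | \theta, X]$ cancels the $1/\alpha(\theta, X)$ in the weight, yielding $\E[\hat{L}_{\text{lazy}} | \theta, X] = \E[\hat{L}_{\text{ABC}} | \theta, X]$. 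An application of the tower property, $\E[\hat{L}_{\text{lazy}} | \theta] = \E\big[\E[\hat{L}_{\text{lazy}} | \theta, X] \,\big|\, \theta\big] = \E[\hat{L}_{\text{ABC}} | \theta] = L_{\text{ABC}}(\theta)$, then completes the argument.

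The main obstacle is handling the degenerate case $\alpha(\theta, X) = 0$, where the cancellation above is formally a division by zero. Here I would invoke condition C1 in contrapositive form: $\alpha(\theta, X) = 0$ forces $\Pr(\hat{L}_{\text{ABC}} > 0 \mid \theta, X) = 0$, so $\E[\hat{L}_{\text{ABC}} | \theta, X] = 0$; simultaneously early stopping occurs with probability one, giving $\hat{L}_{\text{lazy}} = 0$ and hence $\E[\hat{L}_{\text{lazy}} | \theta, X] = 0$. Thus the identity $\E[\hat{L}_{\text{lazy}} | \theta, X] = \E[\hat{L}_{\text{ABC}} | \theta, X]$ holds trivially on this event, leaving the overall argument unaffected; everything else is routine bookkeeping.
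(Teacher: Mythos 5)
Your proof is correct and takes essentially the same route as the paper's: you compute $\E[\hat{L}_{\text{lazy}} \mid \theta, X]$, cancel $\alpha(\theta,X)$ against the continuation probability to get $\E[\hat{L}_{\text{lazy}} \mid \theta, X] = \E[\hat{L}_{\text{ABC}} \mid \theta, X]$, handle the $\alpha(\theta,X)=0$ case via condition C1 exactly as the paper does, and finish by taking expectations over $X$. The only difference is presentational --- you make the Bernoulli stopping variable and its conditional independence from $Y$ explicit, where the paper simply states the conditional expectation directly.
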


\begin{proof}
Non-negativity is immediate.  For unbiasedness first observe that $\E(\hat{L}_{\text{lazy}} | \theta, x)$ equals zero when $\alpha(\theta,x)=0$ and $E(\hat{L}_{\text{ABC}} | \theta, x)$ otherwise.
By C1 if $\alpha(\theta,x)=0$ then $\Pr(\hat{L}_{\text{ABC}}>0 | \theta,x)=0$ and so $\E(\hat{L}_{\text{ABC}} | \theta, x) = 0$.  Hence $\E(\hat{L}_{\text{lazy}} | \theta, X) = \E(\hat{L}_{\text{ABC}} | \theta, X)$.  Taking expectations over $X$ gives the required result.
\end{proof}

\subsection{Examples}

The following are examples of situations in which lazy ABC can be useful. The first three form the main focus of the paper.  Each assumes that $Y$ is a deterministic function of a latent vector $X_{1:p}$ such that it is possible to simulate from $X_1 | \theta$ and $X_i | \theta, x_{1:i-1}$ for all $1<i\leq p$.  Further examples are given in Appendix \ref{sec:multistop} which use the same framework to consider multiple stopping decisions.
\begin{itemize}[leftmargin=*]
\item[] \emph{Example 1: Partial simulation}  Let $X=X_{1:t}$ for some $t<p$.
\item[] \emph{Example 2: Partial calculation of $s$} Assume that computing $s(Y)$ involves calculating variables $X'_{1:q}$ which are deterministic transformations of $Y$, and that this is the most expensive part of simulating $\hat{L}_{\text{ABC}}$.  Let $X=(X_{1:p}, X'_{1:t})$ for some $t<q$.  (This is applied in Section \ref{sec:simmethods}.)
\item[] \emph{Example 3: Random stopping times} As for either previous example but with $t$ replaced by a random stopping time variable $T$.  This allows a stopping decision once a particular event has occurred.
\item[] \emph{Example 4: Deterministic stopping}
Suppose $s(Y)=(s_1(X_{1:t}), s_2(X_{t+1:p}))$.
Let $X=X_{1:t}$ and $\alpha(\theta,x) = \mathbbm{1}[\Pr(\hat{L}_{\text{ABC}}>0 | \theta, x) > 0]$. That is, early stopping occurs if and only if $s_1$ is too extreme for the ABC acceptance criterion to be met.
\end{itemize}

\subsection{Terminology and notation} \label{sec:termnot}

Simulation from $X | \theta$ is referred to as the \emph{initial simulation stage} and from $s(Y) | \theta, x$ as the \emph{continuation simulation stage}.  It is often useful later to have $\alpha(\theta, x) = \alpha(\phi(\theta, x))$ where $\phi(\theta, x)$ is a vector of summaries referred to as the \emph{decision statistics}.

Notation is now introduced for expected CPU times: $\bar{T}_1(\theta)$ is for steps 2 and 3 in Algorithm \ref{alg:lazyABC} conditional on $\theta$, $\bar{T}_2(\theta, \phi)$ is for steps 4--6 conditional on $(\theta, \phi)$ and $\bar{T}(\theta)$ is steps 2--4 of Algorithm \ref{alg:ABC-IS} conditional on $\theta$.  The first two are roughly the times of the initial simulation and continuation stages, but also cover the other steps of Algorithm \ref{alg:lazyABC} given $\theta$.
The following definitions are also used later: $\bar{T}_1 = \E[\bar{T}_1(\theta)]$ and $\bar{T}_2(\phi) = \E[\bar{T}_2(\theta, \phi) | \phi]$.
Note that $\bar{T}_1$ is a constant.
The other terms above are deterministic functions of one or both of $\theta$ and $\phi$ and
may therefore be random variables when their arguments are.



\section{Tuning} \label{sec:tuning}

There is considerable freedom to tune lazy ABC through the choice of $X$ (when to consider stopping) and $\alpha$ (the function assigning continuation probabilities).  Section \ref{sec:tuning theory} provides theory on the most efficient choice of $\alpha$.  This is used
in Section \ref{sec:tuning practice} to motivate practical tuning methods.


\subsection{Theory} \label{sec:tuning theory}

A commonly used tool for the analysis of importance sampling algorithms is the effective sample size (ESS).  \cite{Liu:1996} argued that typically the variance of the importance sampling estimator is roughly equal to that of $N_{\text{eff}}$ independent samples where
\[
N_{\text{eff}} = N E(W)^2 / E(W^2),
\]
$N$ is the number of importance sampling iterations
and the random variable $W$ is the weight generated in an iteration of importance sampling (with zero representing rejection).
The argument of \citeauthor{Liu:1996} generalises immediately to RW-IS algorithms through the interpretation of them as importance sampling algorithms on an augmented parameter space given in Section \ref{sec:randomweightIS}.

Define efficiency as $N_{\text{eff}} / T$ where $T$ is the CPU time of the algorithm (i.e.~ignoring any execution time savings due to parallelisation.)  
Various results on asymptotic efficiency for large $N$ now follow.
These are proved in Appendix \ref{sec:alpha} as corollaries of Theorem \ref{thm:tuning} which is stated there.

The choice of $\alpha(\theta,x)$ which maximises asymptotic efficiency is as follows for some $\lambda \geq 0$,
\begin{equation} \label{eq:tuning1}
\alpha(\theta,x) = \min\left\{1, \lambda \frac{\pi(\theta)}{g(\theta)} \left[\frac{\gamma(\theta,x)}{\bar{T}_2(\theta,x)}\right]^{1/2} 
\right\},
\end{equation}
where $\gamma(\theta, x) = \Pr \left(d(s(Y), s(y_{\text{obs}})) \leq \epsilon | \theta, x \right)$ and $\bar{T}_2(\theta, x)$ is as defined in Section \ref{sec:termnot}. Roughly, $\gamma(\theta, x)$ is the probability that continuing the simulation will meet the ABC acceptance criterion and $\bar{T}_2(\theta, x)$ is the expected time it will take. The interpretation of $\lambda$ is not clear, nor does a simple closed form expression for its optimal value seem possible.

This optimal choice of $\alpha$ is of limited practical use, as typically $X$ is high dimensional and thus estimation of $\gamma(\theta, x)$ and $\bar{T}_2(\theta, x)$ is not feasible. Estimation is more feasible if $(\theta, x)$ is replaced by some low dimensional vector of summaries, $\phi(\theta,x)$, which will be referred to as the \emph{decision statistics}. Consider decision statistics which satisfy the condition
\begin{itemize}
\item[C3] There is a function $u(\phi)$ such that $u(\phi(\theta,x)) = \tfrac{\pi(\theta)}{g(\theta)}$.
\end{itemize}
Then the optimal $\alpha$ amongst those of the form $\alpha(\theta,x)=\alpha(\phi(\theta,x))$ is as follows for some $\lambda \geq 0$,
\begin{equation} \label{eq:tuning2}
\alpha(\phi) = \min\left\{1, \lambda u(\phi) \left[\frac{\gamma(\phi)}{\bar{T}_2(\phi)}\right]^{1/2} \right\},
\end{equation}
where $\gamma(\phi) = \Pr \left(d(s(Y), s(y_{\text{obs}})) \leq \epsilon | \phi \right)$ and $\bar{T}_2(\phi)$ is as defined in Section \ref{sec:termnot}.
See Figures \ref{fig:SIR}D and \ref{fig:SErep}D for example $\alpha(\phi)$ functions based on estimating \eqref{eq:tuning2}.

Some cases in which condition C3 holds are:
$\pi(\theta)=g(\theta)$ (ABC rejection sampling); $\phi(\theta,x)=(\theta,\ldots)$ (the decision statistics include $\theta$); $\phi(\theta,x)=(\tfrac{\pi(\theta)}{g(\theta)}, \ldots)$.
The optimal $\alpha$ when this condition does not hold is described by Theorem 2 in Appendix C.

\subsubsection{Tuning $g$}

It is not clear what the optimal choice of $g(\theta)$ is for lazy ABC. The examples later use typical choices from the ABC literature, but a better choice may improve performance further.
This paper's theory gives the following corollary on the optimal choice of $g(\theta)$ for ABC-IS and RW-IS, which is of some general interest.

\begin{corollary} \label{cor:ABC-IS}
The asymptotic efficiency of ABC-IS is maximised by $g(\theta) \propto \pi(\theta) \left[\dfrac{\gamma(\theta)}{\bar{T}(\theta)}\right]^{1/2}$,
where $\gamma(\theta) = \E(\hat{L}_{\text{ABC}} | \theta)$.
This also holds for RW-IS with $\gamma(\theta) = \E(\hat{L}^2 | \theta)$.
\end{corollary}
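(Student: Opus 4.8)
The plan is to work directly with the asymptotic form of the efficiency and to reduce the optimisation over $g$ to a single application of the Cauchy--Schwarz inequality. This bypasses the full machinery of Theorem \ref{thm:tuning} and treats the ABC-IS and RW-IS cases in exactly parallel ways.

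First I would pin down the quantity being maximised. Writing $W$ for the weight produced in one iteration, $N_{\text{eff}}/N = \E(W)^2/\E(W^2)$, while the total CPU time satisfies $T/N \to \bar{T} := \E[\bar{T}(\theta)]$ almost surely as $N\to\infty$ by the law of large numbers, the expectation being over $\theta \sim g$. Hence the asymptotic efficiency is the $g$-dependent constant $\E(W)^2 / \bigl(\E(W^2)\,\bar{T}\bigr)$. The key preliminary observation is that $\E(W)$ does not depend on $g$: since $W = \hat{L}\,\pi(\theta)/g(\theta)$ with $\theta\sim g$ and $\hat{L}$ conditionally unbiased, $\E(W) = \int \pi(\theta)\,\E(\hat{L}|\theta)\,d\theta$, which equals $\pi(y_{\text{obs}})$ for RW-IS and $\int \pi(\theta)\gamma(\theta)\,d\theta$ for ABC-IS. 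Maximising efficiency therefore reduces to minimising the product $\E(W^2)\,\bar{T}$ over admissible densities $g$.

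I would then compute the two factors of this product. Taking expectations first over the estimator and then over $\theta\sim g$,
\[
\E(W^2) = \int g(\theta)\,\E(\hat{L}^2|\theta)\,\frac{\pi(\theta)^2}{g(\theta)^2}\,d\theta = \int \frac{\pi(\theta)^2\,\gamma(\theta)}{g(\theta)}\,d\theta,
\]
where $\gamma(\theta)=\E(\hat{L}^2|\theta)$; for ABC-IS the estimator $\hat{L}_{\text{ABC}}$ is Bernoulli, so $\hat{L}_{\text{ABC}}^2 = \hat{L}_{\text{ABC}}$ and this $\gamma$ coincides with $\E(\hat{L}_{\text{ABC}}|\theta)$, matching the corollary. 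Likewise $\bar{T} = \int g(\theta)\,\bar{T}(\theta)\,d\theta$. Applying Cauchy--Schwarz with $a(\theta)=\pi(\theta)\sqrt{\gamma(\theta)/g(\theta)}$ and $b(\theta)=\sqrt{g(\theta)\,\bar{T}(\theta)}$ gives
\[
\E(W^2)\,\bar{T} = \Bigl(\textstyle\int a^2\Bigr)\Bigl(\textstyle\int b^2\Bigr) \geq \Bigl(\textstyle\int a\,b\Bigr)^2 = \left(\int \pi(\theta)\sqrt{\gamma(\theta)\,\bar{T}(\theta)}\,d\theta\right)^2,
\]
a lower bound independent of $g$, attained exactly when $a \propto b$, i.e. when $\pi(\theta)^2\gamma(\theta)/g(\theta) \propto g(\theta)\,\bar{T}(\theta)$. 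Solving yields $g(\theta)\propto \pi(\theta)\bigl[\gamma(\theta)/\bar{T}(\theta)\bigr]^{1/2}$, as claimed, and the RW-IS case follows verbatim with $\gamma(\theta)=\E(\hat{L}^2|\theta)$.

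The main obstacle is not the algebra but the justification of the limiting and integrability statements: I must confirm that $T/N \to \bar{T}$ (finiteness of the expected per-iteration time and independence across iterations), that $\E(W^2)$ and $\int \pi\sqrt{\gamma\,\bar{T}}$ are finite so the Cauchy--Schwarz step is non-vacuous, and that the minimiser genuinely satisfies $g>0$ wherever $\pi>0$ and normalises to a density. I would also note that $g$ is determined only up to the proportionality constant fixed by $\int g = 1$, which is harmless since the efficiency is invariant to rescaling of $g$.
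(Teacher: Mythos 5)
Your proof is correct, and it takes a genuinely different route from the paper's. The paper derives Corollary \ref{cor:ABC-IS} in Appendix \ref{sec:g} as a by-product of Theorem \ref{thm:tuning}: it embeds RW-IS into lazy importance sampling by setting $\phi=\theta$, $\bar{T}_1(\theta)=0$ and $\bar{T}_2(\phi)=\bar{T}(\theta)$, then re-runs the Lagrange-multiplier and pointwise-optimisation argument of Appendix \ref{sec:alphaproof} on the \emph{product} $\alpha(\theta)g(\theta)$, obtaining the unconstrained solution $\alpha(\theta)g(\theta)=\lambda\pi(\theta)\left[\gamma(\theta)/\bar{T}(\theta)\right]^{1/2}$ and then normalising via the choice $\alpha\equiv 1$. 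You instead observe that $\E(W)$ is free of $g$, reduce the problem to minimising $\E(W^2)\,\E(T)/N=\bigl(\int \pi^2\gamma/g\bigr)\bigl(\int g\bar{T}\bigr)$, and settle it with a single application of Cauchy--Schwarz, the equality condition $\pi^2\gamma/g \propto g\bar{T}$ yielding $g\propto\pi\sqrt{\gamma/\bar{T}}$ directly. Your argument is more elementary and self-contained: it dispenses with the auxiliary constrained problems $P(\upsilon)$ and $P(\upsilon,\mu)$ entirely, which is legitimate here precisely because $g$, unlike $\alpha$, is unconstrained apart from normalisation, so no $\min\{1,\cdot\}$ truncation intervenes; it also certifies global optimality over all admissible densities in one step and exhibits the optimal value $\bigl(\int \pi\sqrt{\gamma\bar{T}}\,d\theta\bigr)^2$ explicitly. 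What the paper's route buys is uniformity with the rest of its theory (the corollary drops out of the same machinery as the $\alpha$-tuning results) and the additional insight that a family of $(\alpha, g)$ pairs attains the optimum, of which plain RW-IS with $\alpha\equiv 1$ is one representative. Your flagged caveats (finiteness of $\E(W^2)$ and $\int\pi\sqrt{\gamma\bar{T}}$, positivity of $g$ on the support of $\pi$, the law of large numbers for $T/N$ where the paper instead assumes a central limit theorem for $T$) are the right ones and are no stronger than what the paper leaves implicit; the only residual boundary issue, equally untreated by the paper, is that if $\gamma(\theta)=0$ on a set of positive prior mass the formal optimiser vanishes there, conflicting with the standing assumption $g(\theta)>0$ wherever $\pi(\theta)>0$, so the supremum is then approached rather than attained.
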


\begin{proof} See Appendix \ref{sec:g}. \end{proof}


\subsection{Methods} \label{sec:tuning practice}

The theory above motivates choosing $\alpha$ by estimating \eqref{eq:tuning2}.  This section details a method to implement this approach.  Its effectiveness is discussed in Section \ref{sec:discussion}.

The method is outlined here and more detail is given in Sections \ref{sec:tbarest} to \ref{sec:movingepsilon}.
Tuning begins with a pilot run of $N'$ iterations of ABC-IS,
recording intermediate simulation states and timings of interest.
This is used to estimate $\gamma(\phi)$ and $\bar{T}_2(\phi)$ for various choices of $X$ and $\phi$, considering only $\phi$ such that condition C3 holds.  Under each of these choices, $\lambda$ is found by numerically maximising an estimate of efficiency.  The optimal choice of $X$ and $\phi$ is then made.
Following tuning, $N$ iterations of lazy ABC are performed (unless the estimated efficiency gains are judged inadequate, in which case ABC-IS can be used).

A simpler variation on this method is possible if the set of possible $\phi(\theta,x)$ values is finite and small. A pilot run is performed as before. Then $\alpha(\phi)$ values are chosen by direct numerical optimisation of an estimate of the algorithm's efficiency.



\subsubsection{Estimation of $\bar{T}_2(\phi)$} \label{sec:tbarest}

It may often suffice to treat $\bar{T}_2(\phi)$ as constant and estimate it as the mean CPU time of the continuation stage in the pilot run.  This is the case if knowledge of the simulation process shows the number of computational operations required is unaffected by $\phi$, or if the pilot run shows $\bar{T}_2(\phi)$ varies little relative to $\gamma(\phi)$.  Alternatively, statistical methods such as regression can be used for estimation, which is straightforward when $\phi$ is low dimensional.

\subsubsection{Estimation of $\gamma(\phi)$} \label{sec:gammaest}

Estimation of $\gamma(\phi)$ is more difficult.  Two approaches are suggested: the ``standard'' approach, producing $\hat{\gamma}^{(1)}$, attempts accurate estimation but typically involves strong assumptions; the ``conservative'' approach, producing $\hat{\gamma}^{(2)}$, sacrifices accuracy to improve robustness.  They are based on two equivalent expressions for $\gamma(\phi)$: $\Pr(d(s(Y),s(y_{\text{obs}})) \leq \epsilon | \phi)$ and $\E[\hat{L}_{\text{ABC}} | \phi]$.  Examples of successful implementations of both approaches are given in Sections \ref{sec:epidemics} and \ref{sec:extremes}.

The standard approach is to model the relationship between $\phi$ and $d(s(Y), s(y_{\text{obs}}))$ and use this to estimate $\Pr(d(s(Y),s(y_{\text{obs}})) \leq \epsilon | \phi)$.
However a difficulty is that for most $\phi$ values this involves extrapolating into the tails of the distribution of $d(s(Y),s(y_{\text{obs}})) | \phi$.  See Figure \ref{fig:SErep}A for example.  This creates a danger of underestimating the optimal $\alpha$ values and potentially producing very large importance sampling weights.
(This can be avoided for simple summary statistics by first modelling $s(Y)|\phi$; see Section \ref{sec:epidemics} for example.)

The conservative approach is to select $\epsilon_1$ following the pilot run such that a sufficiently large number of its simulations $y_{1:N'}$ satisfy $d(s(y_i),s(y_{\text{obs}})) \leq \epsilon_1$.  Let $z_i$ be indicator variables denoting meeting this condition and model the relationship between $z_i$ and the simulated $\phi_i$ values.  One method, used in the application later, is non-parametric logistic regression following \cite{Wood:2011}.  This approach is effectively tuning the method based on an $\epsilon$ value larger than that of interest.  This is an inefficient way to sample from the target of interest.  However, if tuning can be done well for the larger $\epsilon$ value then this method is safe from producing any dangerously large importance weights, as discussed further in Section \ref{sec:movingepsilon}.  Nonetheless, for $\phi$ regions where there are no $z_i=1$ values the conservative estimate is still based on extrapolation and unlikely to be accurate.  Consequences of this are discussed in Section \ref{sec:discussion}.

\subsubsection{Estimating efficiency} \label{sec:effest}

The tuning method outlined above requires the use of $N'$ pilot run iterations to estimate the efficiency of lazy ABC under various choices of tuning details (in particular $X$, $\phi$ and $\alpha$).  It is sufficient to estimate $[\E(W^2) \E(T)]^{-1}$, as this equals efficiency up to a constant of proportionality.  This can be used to estimate efficiency relative to ABC-IS, which is a particularly interpretable form of the results as it shows the efficiency improvement of using lazy ABC.

Assume that for a particular choice of tuning details the following are available for $1 \leq i \leq N'$: $t^{(1)}_i$ - initial simulation stage time; $t^{(2)}_i$ - continuation simulation stage time; $\alpha_i$ - continuation probability; $\hat{\gamma}_i$ - estimate of $\E(\hat{L}_{\text{ABC}} | \phi_i)$; $u_i$ - ratio $\pi(\theta)/g(\theta)$.  An estimate up to proportionality of efficiency is then $[\widehat{W^2} \hat{T}]^{-1}$ where 
$\widehat{W^2} = N'^{-1} \sum_{i=1}^{N'} u_i^2 \hat{\gamma}_i/\alpha_i$ and
$\hat{T} = \sum_{i=1}^{N'} t^{(1)}_i + \sum_{i=1}^{N'} \alpha_i t^{(2)}_i$.
An estimate of efficiency of ABC-IS is formed by taking $\alpha \equiv 1$.  Note that this often overestimates $\E(T)$ as interrupting the simulation to record intermediate states and timings reduces algorithm efficiency.
A more precise estimate would be possible using further pilot simulations without this interruption.

\subsubsection{Combining pilot and main run output} \label{sec:combining}

To make efficient use of the pilot run, it can be used in the final output as well as for tuning.  This is done by appending the pilot sequence of $(\theta,w)$ pairs to that from the main algorithm.  Loosely speaking, since each individual sequence targets the same distribution, so does the combined sequence.  More technically, it is straightforward to see that ABC versions of relations \eqref{eq:IS1} and \eqref{eq:IS2} are roughly true for the combined sequence when $N$ and $N'$ are large, and are exactly true as $N \to \infty$ regardless of $N'$. Also note that on appending the sequences, gains in efficiency are possible by multiplying the weights of one sequence by a constant, but this is not implemented here as little improvement was observed in the application later.

\subsubsection{Choice of $\epsilon$} \label{sec:movingepsilon}

In ABC-IS, an appropriate value $\epsilon$ is often unknown a priori and is instead chosen based on the simulated $d(s(Y),s(y_{\text{obs}}))$ values.  For lazy ABC in this situation one can use the pilot run to select a preliminary conservative choice of $\epsilon_1$ as in Section \ref{sec:gammaest}
and perform lazy ABC with $\epsilon=\epsilon_1$.  Alternative values of $\epsilon$ can then be investigated by updating the realised $\hat{L}_{\text{ABC}}$ values in the weight calculations.  For $\epsilon < \epsilon_1$ this simply reduces the number of non-zero weights.  However $\epsilon \gg \epsilon_1$ is not recommended as this may introduce large weights and destabilise the importance sampling approximation.

\section{Example: infectious diseases} \label{sec:epidemics}

This section illustrates a basic implementation of lazy ABC and the tuning approach of Section \ref{sec:tuning practice} on a simplified version of the standard Markovian SIR infectious disease model.
While ABC is not necessary for inference under this model \citep[see][]{Andersson:2000}, it is used for more complex variants \citep{McKinley:2009,Brooks:2014}.

Suppose the random variables $S(t), I(t), R(t)$ follow a discrete time Markov chain.
These represent the population size which is susceptible, infectious or recovered after $t$ transitions.
Two transitions are possible.
Let $M=S(0)+I(0)+R(0)$ be the total population size.
With probability $k \tfrac{R_0}{M} S(t) I(t)$, an infection occurs giving $S(t+1)=S(t) - 1$, $I(t+1)=I(t)+1$ and $R(t+1)=R(t)$.
With probability $k I(t)$, a recovery occurs giving $S(t+1)=S(t)$, $I(t+1)=I(t)-1$ and $R(t+1)=R(t)+1$.
The constant $k$ is chosen so that the probabilities sum to 1.
When $I(t)=0$ the chain terminates
and a simple random sample of size $100$ is taken from the population.
The observed data $y_{\text{obs}}$ is the number of these which are recovered.
The parameter of interest is the \emph{basic reproduction number} $R_0$,
which is given a prior of $\text{Gamma}(3,1)$.
It is assumed that $M=10^5$, $I(0)=10^3$ and $R(0)=0$, giving $S(0)=9.9 \times 10^4$.

Standard ABC can be implemented here simply by taking $s(y)=y$ and $d(s_1,s_2)=|s_1-s_2|$.
Standard ABC rejection sampling was performed on a simulated dataset with $y_{obs}=73$ using $N=10^4$ iterations and $\epsilon=1$.
These settings are also used for the three lazy ABC analyses that follow.

First a simple version of lazy ABC with ad-hoc tuning choices is presented.
This considers stopping at $t=1000$, so that $X=(S(1000),I(1000),R(1000))$.
The decision statistic $\phi(\theta,X)$ is $I(1000)$.
It seems reasonable to stop if there is evidence that the number of infectious is not growing.
So let $\alpha(\phi) = 0.1$ if $\phi \leq 1000$, and $\alpha(\phi)=1$ otherwise.

The second lazy ABC analysis uses the standard tuning method of Section \ref{sec:tuning practice}, with the same choice of $X$ and $\phi$ as above.
A pilot run of 1000 standard ABC simulations was performed.
Figures \ref{fig:SIR}A and B plot pilot run realisations of $\phi$ against $y$, the observation, and the continuation stage simulation time.
From this $\bar{T}_2(\phi)$, expected remaining simulation time given $\phi$, was estimated by fitting a non-parametric regression, using a log link to enforce positivity.
Also $\hat{p}(\phi)$, the expectation given $\phi$ of the proportion of the population who are recovered when the Markov chain terminates, was estimated by fitting a non-parametric binomial GLM.
This GLM model fits the pilot data well as illustrated by Figure \ref{fig:SIR}A,
which shows the resulting expected number recovered in the final subsample and a 95\% confidence interval.
All non-parametric regressions were fitted using the approach of \cite{Wood:2011}.
An estimate of the probability of ABC acceptance, $\hat{\gamma}(\phi)$ can be directly calculated from $\hat{p}(\phi)$ and is shown in Figure \ref{fig:SIR}C.
It is now possible to estimate $\alpha(\phi)$ for a given value of $\lambda$ from \eqref{eq:tuning2}.
For each $\lambda$ the efficiency estimate of Section \ref{sec:effest} can be calculated.
Numerical optimisation shows this is maximised for $\lambda=3.45$.
Figure \ref{fig:SIR}D shows the estimated optimal $\alpha$ function.

Finally lazy ABC using the conservative tuning approach is considered.
This proceeds as for the standard tuning approach just described, except for the estimation of $\gamma(\phi)$.
A variable $z$ is introduced to indicate whether the pilot run data has ABC distance less than or equal to $\epsilon_1=3$.
This variable equals 1 for 50 cases and 0 for the remainder.
Then $\gamma(\phi)$ is estimated by fitting a non-parametric logistic regression of $z$ on $\phi$, with the results are shown in Figure \ref{fig:SIR}C.
This estimate is more conservative than that above in that it always assigns a higher probability of ABC acceptance.
Numerical optimisation chooses $\lambda=2.54$ and the resulting $\alpha(\phi)$ function is shown in Figure \ref{fig:SIR}D.

Table \ref{tab:SIR} shows the results of the analyses described above.
Note that each analysis used the same sequence of random seeds so that their $(\theta, X, Y)$ simulations were the same.
Each analysis accepted the same 194 parameter values, except for lazy ABC with conservative tuning which accepted a subset of 177 of these.
Under lazy ABC with ad-hoc tuning all weights were 1.
With standard tuning 189 weights equalled 1 with the others below 3.
With conservative tuning 131 weights equalled 1, with the others below 4.
Estimates of posterior quantities from all methods were very similar.

\begin{figure*}
\begin{center}
\includegraphics{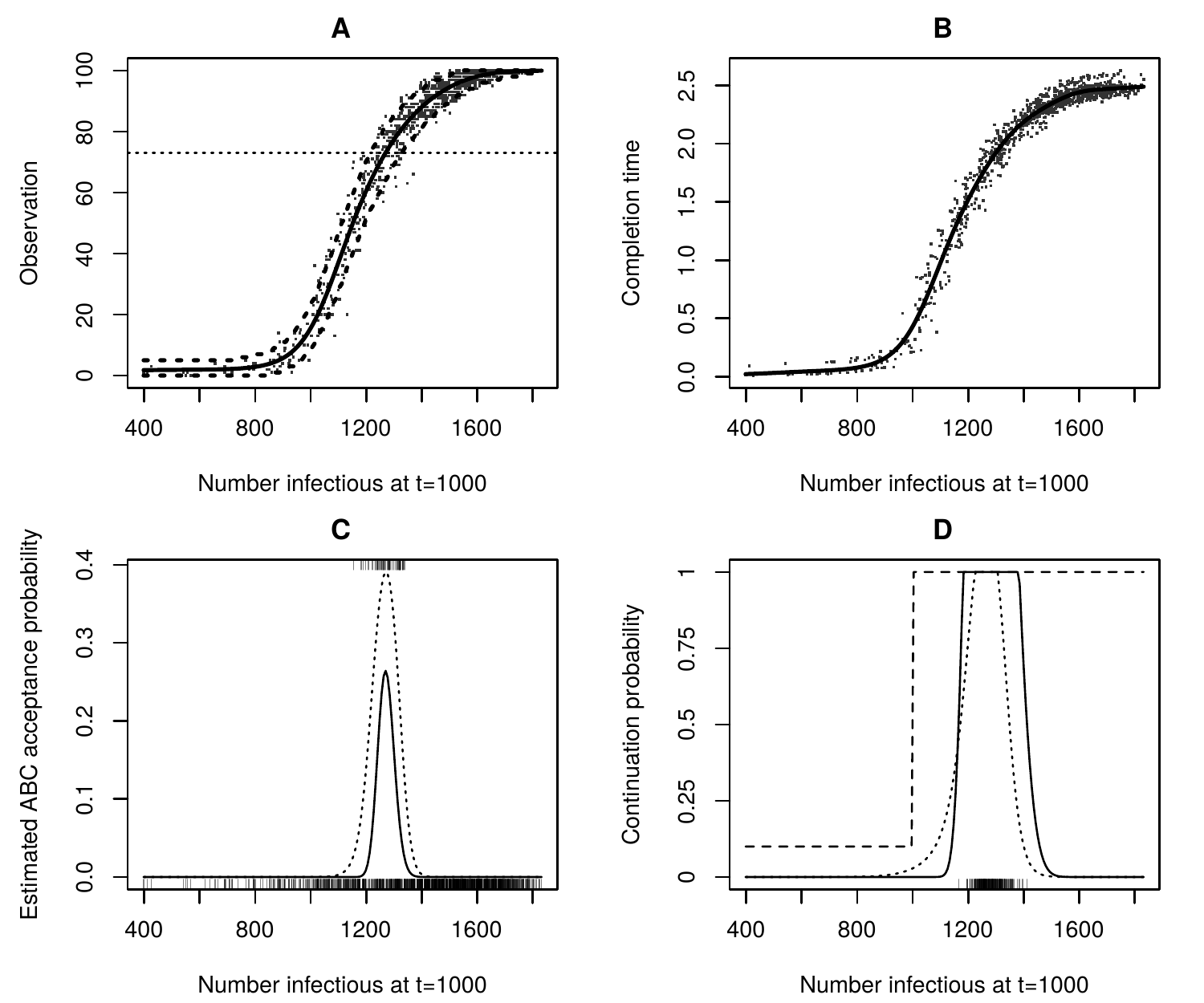}
\end{center}
\caption{\label{fig:SIR}
Tuning details of a simulation study applying lazy ABC to an SIR epidemic model.
\emph{Panel A} Pilot run realisation of $I(1000)$ (the decision statistic: number of infectious at $t=1000$) and $y$ (number recovered in subsample).
The solid line is the prediction under a non-parametric logistic regression and the dashed lines are corresponding 0.025 and 0.975 quantiles.
The dotted line shows the observed data.
\emph{Panel B} Pilot run realisations of $I(1000)$ and $T_2$ (continuation stage simulation time, in seconds).
The solid line is a fitted non-parametric regression.
\emph{Panel C} Estimated $\gamma(\phi)$ functions -- probability of eventual ABC acceptance -- under standard (solid) and conservative (dotted) tuning. Standard tuning derives this curve from panel A. Conservative tuning fits a logistic regression to the indicator variable represented by vertical marks. This indicates which pilot runs result in an ABC distance of 3 or less.
\emph{Panel D} $\alpha$ functions (continuation probabilities) under ad-hoc (dashed), standard (solid) and conservative (dotted) tuning. The latter two cases are derived from the pilot run as described in Section \ref{sec:tuning practice}.
The marks on the horizontal axis indicate the ABC simulations which were accepted.
}
\end{figure*}

\begin{table*}[htp]
\centering
\fbox{
\begin{tabular}{ccccc|cc}
\multirow{2}{*}{ABC method} & \multirow{2}{*}{Tuning} & \multirow{2}{*}{Time (s)} & \multirow{2}{*}{ESS} & Relative & \multicolumn{2}{c}{Estimated posterior} \\
           &              &          &       & efficiency & Mean & Standard deviation \\
\hline
Standard   & -            & 18124    & 194   & 1    & 1.803 & 0.1267 \\
Lazy       & Ad-hoc       & 17848    & 194   & 1.02 & 1.803 & 0.1267 \\
Lazy       & Standard     & 5113     & 192   & 3.51 & 1.804 & 0.1276 \\
Lazy       & Conservative & 3318     & 167   & 4.70 & 1.796 & 0.1212
\end{tabular}}
\caption{\label{tab:SIR}Simulation study on a SIR epidemic model. Four ABC analyses were performed on the same observations, sharing the same random seeds for their simulations. All used the same choice of $N=10^4$ and $\epsilon=1$. Iterations were run in parallel and computation times are summed over all cores used. Efficiency (ESS/CPU time) relative to standard ABC is shown. The pilot run required for tuning the last two rows tuning took a further 2016 seconds, with the remaining tuning calculations taking less than 2 seconds. For simplicity results are for lazy ABC output only, pilot run simulations have not been appended as in Section \ref{sec:combining}.}
\end{table*}

This example shows that it is straightforward to implement lazy ABC using ad-hoc tuning.
The time savings here are small as the $\alpha$ function is quite conservative, only allowing early stopping when the epidemic is already dying out at time $t=1000$.
A more effective $\alpha$ could be sought using results on the typical behaviour of this model \citep[see][]{Andersson:2000}.
The above instead considers the tuning method of Section \ref{sec:tuning practice} and shows it is simple to implement here and produces significant time savings.
Conservative tuning does better, but this may reflect variability of results for a relatively small number of iterations.

\section{Example: spatial extremes} \label{sec:extremes}

This section uses lazy ABC in a computationally demanding application of ABC to spatial extremes introduced by \cite{Erhardt:2012}. As well as illustrating the method for a complicated application, simulation studies are performed to investigate its efficiency.

\subsection{Background}

The observation $y_{t,d}$ represents the maximum measurement (e.g.~of rainfall) during year $t$ at location $x_d \in \mathbb{R}^2$.  There are $D$ locations and $M$ years.  The data are treated as $M$ independent replications of a spatial distribution.  Several models based on extreme value theory have been proposed, and \citeauthor{Erhardt:2012} concentrate on the \emph{Schlather process} \citep{Schlather:2002}.

The details of the Schlather process follow, together with the method \citeauthor{Erhardt:2012} used to calculate ABC summary statistics. The full mathematical details are not essential to understanding the implementation of lazy ABC. Instead, what is most important is the order of operations required to simulate a dataset and summary statistics. This is summarised as Algorithm \ref{alg:extremes sim} for later reference.

The Schlather process is based on independent identically distributed mean zero stationary Gaussian processes $U_i(x)$ where $i=1,2,\ldots$ (indexing an infinite number of Gaussian processes) and $x \in \mathbb{R}^2$ (representing location). 

The correlation between locations $x$ and $x'$ is given by the correlation function $\rho(h)$ where $h=||x-x'||_2$.  Let $s_i$ be draws from a Poisson process on $s \in (0,\infty)$ with intensity $\mu^{-1} s^{-2}$, where $\mu=\E[\max(0,U(x))]$.  Then the Schlather process is
\begin{equation} \label{eq:Schlather}
Y(x) = \max_i s_i \max(0, U_i(x)).
\end{equation}
\citeauthor{Erhardt:2012} focus on the Whittle-Mat\'ern correlation function with zero nugget
\begin{equation} \label{eq:Whittle-Matern}
\rho(h; c, \nu) = \frac{2^{1-v}}{\Gamma(\nu)} \left(\frac{h}{c}\right)^\nu K_\nu \left(\frac{h}{c}\right),
\end{equation}
where $\Gamma$ is the gamma function and $K_\nu$ is the modified Bessel function of the third kind with order $\nu$.  This has two parameters: range $c>0$ and smoothness $\nu > 0$.

A density function for the Schlather process is not available for $D>2$, making inference difficult.  \cite{Schlather:2002} provides a near-exact algorithm to simulate from the process based on only a finite number of copies of $U_i$,  motivating the use of ABC by \citeauthor{Erhardt:2012}.
They applied ABC rejection and importance sampling with a uniform prior on $[0,10]^2$ and investigated several choices of summary statistics.  The analysis here focuses on the choice they find most successful, based on \emph{tripletwise extremal coefficient estimators}.  Given a triple of 3 locations, $i,j,k$, this estimator is
\begin{equation} \label{eq:extremalcoeff}
\hat{\theta}_{ijk} = \frac{M}{\sum_{t=1}^{M}} 1/\max(y_{t,i}, y_{t,j}, y_{t,k}).
\end{equation}
There are $O(D^3)$ such summaries, so \citeauthor{Erhardt:2012} calculate a vector $m$ of mean values within $100$ clusters of triples, and use these as summary statistics.  Their clustering process 
finds triples of similar shapes, ignoring differences of location and rotation.  The ABC distance function between two vectors $m_1$ and $m_2$ of cluster means is
\begin{equation} \label{eq:sedist}
d(m_1, m_2) = \sum_{i=1}^{100} |m_{1i} - m_{2i}|.
\end{equation}
Although applying dimension reduction techniques to such high dimensional summaries has been shown to often improve ABC results \citep{Fearnhead:2012}, this is not investigated here as the aim is to investigate the efficiency improvements of lazy ABC.


Implementing the analysis below used the R packages ``SpatialExtremes'' \citep{Ribatet:2013} to simulate from the Schlather process and ``ABCExtremes'' to implement some details of the approach of \citeauthor{Erhardt:2012}.

\begin{algorithm}[htp]

\algline \\
{\bf Input:}
\begin{itemize}
\item Parameters: range $c>0$ and smoothness $\nu>0$.
\item Locations $x_1,\ldots,x_D$.
\item Number of years $M$.
\end{itemize}
{\bf Data simulation:}
\begin{enumerate}
\item[1] Calculate $D \times D$ covariance matrix $\Sigma(c,\nu)$ using \eqref{eq:Whittle-Matern}.
\item[2] Repeat the following steps for $t=1,\ldots,M$.
\begin{enumerate}
\item[a] Sample values $s_1, s_2, \ldots, s_q$ from a Poisson process (see text for intensity).
\item[b] For each $1\leq i \leq q$, sample $u_i(x_1), u_i(x_2), \ldots, u_i(x_D)$ from $N(0,\Sigma(c,\nu))$.
\item[c] Calculate Schlather process realisations for year $t$ by \eqref{eq:Schlather}.
\end{enumerate}
\end{enumerate}
{\bf Summary statistic calculation:}
\begin{enumerate}
\item[1] Calculate extremal coefficient estimate $\hat{\theta}_{ijk}$ for all triples of locations $ijk$ using \eqref{eq:extremalcoeff}.
\item[2] Calculate mean $\hat{\theta}_{ijk}$ values for each cluster of triples.
\end{enumerate}
{\bf Output:}
\begin{itemize}
\item Summary statistic vector comprising the cluster means.
\end{itemize}
\algline
\caption{Overview of simulation of Schlather process data and summary statistics. Full details of the steps are given in the text. \label{alg:extremes sim}}
\end{algorithm}

\subsection{Methods} \label{sec:simmethods}

Exploratory investigation of ABC code with $D=20$ and $M=100$ showed that the majority of time was spent simulating the data (7.1ms/iteration) and calculating extremal coefficient estimates (17.9ms/iteration), with the remaining steps being brief (3.1ms/iteration).  The time costs of the first two of these scaled with $D$ as roughly proportional to $D$ and $D^3$ respectively, so the latter is expected to dominate for large $D$.  Furthermore, interrupting and then resuming operations during the calculation of extremal coefficients is much simpler to implement than during simulation of data.  Therefore the initial simulation stage of the lazy ABC analysis was chosen to be simulating the data at all locations, and extremal coefficient estimates at a subset of locations $L$.  The continuation simulation stage was to calculate the remaining extremal coefficient estimates.

The decision statistic $\hat{d}$ was constructed as follows.  Let $m_{1i}$ be the $i$th cluster mean for the observed data.  Let $\hat{m}_{2i}$ be the $i$th cluster mean for the simulated data using only extremal coefficient estimates available at the initial simulation stage, and $B$ be the set of clusters for which any such estimates are available.  Then define $\hat{d} = \sum_{i \in B} |m_{1i} - \hat{m}_{2i}|$.  This is an estimate of the ABC distance $d$ \eqref{eq:sedist}.  It could be improved by estimating typical $\hat{m}_{2i}$ values for $i \not \in B$ but including such constant terms has no effect on the analysis below.

It was assumed that $\bar{T}_2(\hat{d})$ is constant as, given $D$, $L$ and $M$, the continuation stage always involves the same number of calculations.  The value was estimated by the mean CPU time for this stage in the pilot run.  
Analyses were performed using both the standard and conservative $\gamma$ estimators.
To calculate $\hat{\gamma}_1(\hat{d})$ from pilot run output, the relationship between $\hat{d}$ and $d$ was modelled statistically.  Exploratory analysis showed that there was a roughly linear relationship, but for some choices of $L$ this was heteroskedastic (see Figure \ref{fig:SErep}A).  Furthermore, for small $\hat{d}$ the distribution of $d | \hat{d}$ was skewed.  So $\Pr(d \leq \epsilon | \hat{d})$ was estimated based on a linear regression of $d$ on $\hat{d}$ with a Box-Cox transformation, using only simulations with nearby values of $\hat{d}$.  This was done for several $\hat{d}$ values and interpolated estimates elsewhere formed  $\hat{\gamma}_1(\hat{d})$.
For the importance sampling case, $\log u$ was also included in each regression giving a number of functions mapping $u$ to estimates of $\Pr(d \leq \epsilon | \hat{d}, u)$ for various $\hat{d}$ values, which were used for interpolation.
Calculation of $\hat{\gamma}_2$ was as described in Section \ref{sec:gammaest}, taking $\epsilon_1$ to give 100 acceptances in the pilot run.
Given estimates of $\bar{T}_2$ and $\gamma$, tuning was performed as described in Section \ref{sec:tuning practice}, with optimisation over possible choices of $L$ by backwards selection.

Three simulation studies were performed.
The first replicated the rejection sampling analysis of \citeauthor{Erhardt:2012} on several simulated datasets.  These used $D=20, M=100$ and true parameter values shown in Table \ref{tab:SErep}.  Each dataset used a different set of observation locations with integer coordinates sampled from $[0,10]^2$.  The first analysis was a replication of the standard ABC analysis, using $\epsilon$ values corresponding to 200 acceptances.  Then lazy ABC was performed on the same datasets under each method of estimating $\gamma$.  To compare the methods fairly, lazy ABC used the same $\epsilon$ value as standard ABC and reused its random seeds so that the sequence of $(\theta, X, Y)$ realisations is also the same.

The second simulation study investigated rejection sampling for a single larger simulated dataset with $D=35$, $c=0.5$ and $\nu=1$.  Locations were chosen as before.  As in a real application $\epsilon$ was not assumed to be known in advance and the approach of Section \ref{sec:movingepsilon} was used to select this post-hoc.  A complication for this dataset was that the simulation of Gaussian processes was difficult when both parameters were large: the default ``direct method'', based on Choleski decomposition, sometimes produced numerical errors.  Simulation was possible via the turning bands method (TBM) but much slower (roughly 150 times the CPU time).  A two stage simulation method was implemented.  First the direct method was attempted and if this failed TBM was used.  To save time lazy ABC was implemented with multiple stopping decisions, the first taking place after attempting the direct method.  This has a binary decision statistic indicating success or failure.  The second stopping decision is as described earlier.  Tuning was performed as described in Appendix \ref{sec:onecont}, using $\hat{\gamma}_1$ fitted as described above by either the standard or conservative tuning method.  The standard method used $\epsilon_1$ to give 30 acceptances in the pilot run.  As before all analyses reused the same random seeds.

Finally an importance sampling analysis was performed on the larger dataset.  A sample of $10^4$ log parameter values was taken from simulations of the preceding standard ABC analysis with distances below the 0.3 quantile.  A Gaussian mixture distribution was constructed with locations given by this sample and variances equal to twice the empirical variance of the sample.  After truncation to the prior support, this was used to give $g(\theta)$, where $\theta$ now represents the log parameters.  This choice follows the suggestions of \cite{Beaumont:2009},
noting that using the log scale produced a better fit to the sample and that the subsample was used to avoid slow density calculations.  The preceding $D=35$ analysis was then repeated.
As discussed in Section \ref{sec:tuning practice}, $u=\pi(\theta)/g(\theta)$ was included as a decision statistic.  Estimation of $\gamma$ and $\bar{T}_2$ was performed as before with $u$ included in the $\gamma$ estimate as described earlier.

All ABC analyses performed $10^6$ total iterations.  For lazy ABC $10^4$ of these comprised the pilot run.

\subsection{Results}

Figure \ref{fig:SErep} illustrates some details of tuning for one case of the $D=20$ study.  The results are shown in Table \ref{tab:SErep}.  For all datasets lazy ABC is roughly 4 times more efficient under the standard tuning method and 3 times under the conservative method.  
Efficiency gains for conservative tuning are slightly less than estimated.  This is because the estimate is made for a choice of $\epsilon_1$ larger than the final $\epsilon$.
The mean weights were also investigated, as these are useful in model selection as an estimate of $\pi(y_{\text{obs}})$.  All lazy ABC estimates differed from the standard ABC estimate by no more than 4\%.

\begin{figure*}
\begin{center}
\includegraphics{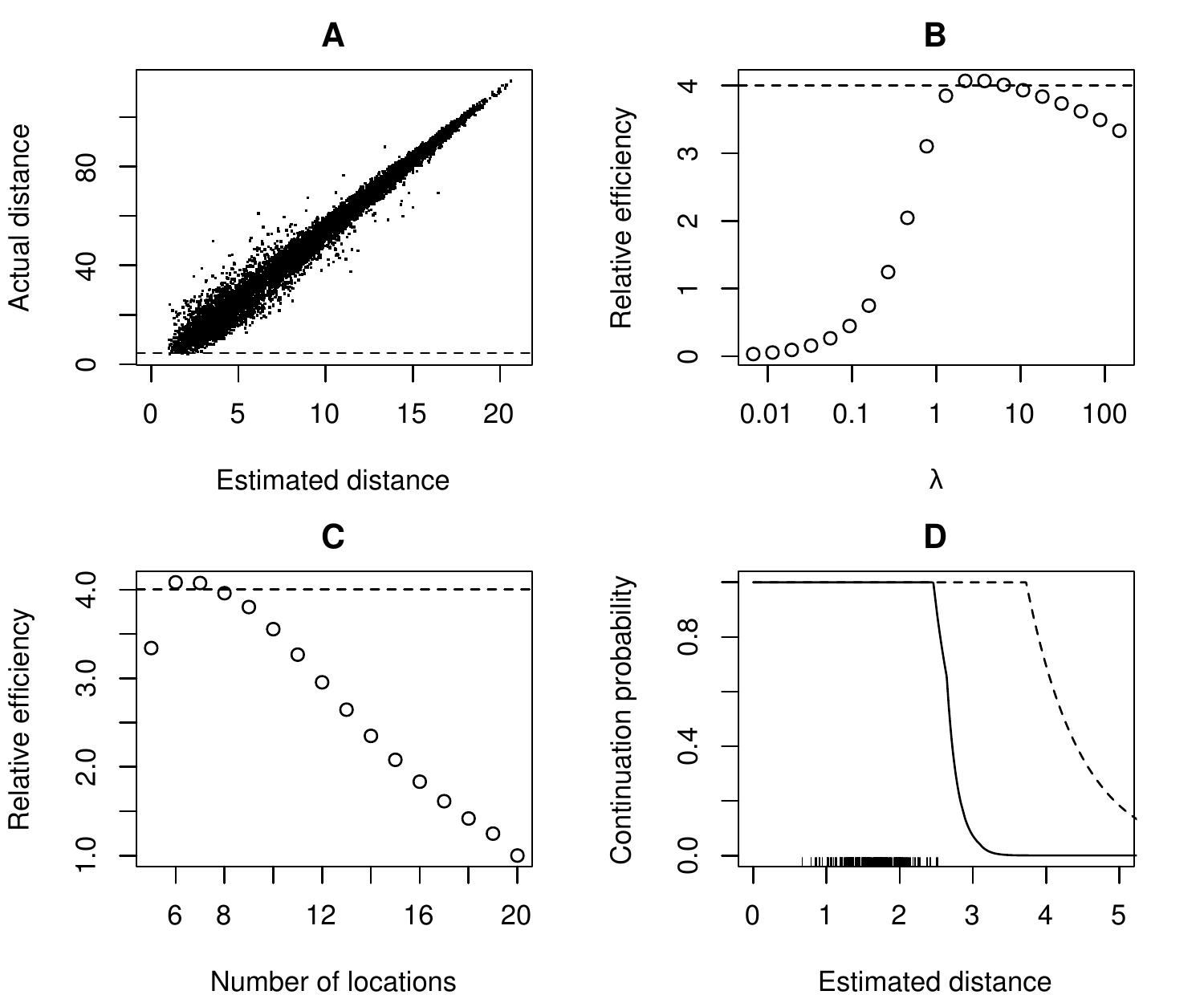}
\end{center}
\caption{\label{fig:SErep} Details of tuning lazy ABC in a simulation study on spatial extremes corresponding to the first row of Table \ref{tab:SErep}.  Panels A-C concentrate on the standard tuning approach.  \emph{Panel A} Pilot run values of $\hat{d}$ (estimated distance) and $d$ (actual distance).  The dashed line shows the value of $\epsilon$.  \emph{Panel B} Estimates of efficiency (ESS/time) for different values of $\lambda$ (parameter of \eqref{eq:tuning2} which must be determined numerically).  The dashed line shows the realised efficiency (ESS/time of the final lazy ABC algorithm).  \emph{Panel C} Estimated efficiency for the best choices of $L$ (subset of locations from which early stopping is considered) of various lengths output by backwards selection.  The dashed line shows the realised efficiency.  \emph{Panel D} Values of $\hat{d}$ and $\alpha$ (continuation probability) from non-pilot simulations under standard (solid line) and conservative (dashed line) tuning.  The marks on the horizontal axis indicate the simulations which resulted in positive weights.  (For this panel conservative tuning was performed using $L$ as selected by standard tuning.)}
\end{figure*}

\begin{table*}[pht]
\centering
\fbox{ \footnotesize
\begin{tabular}{cc|c|ccc|cc}
      &        & Standard       & \multicolumn{3}{|c|}{Lazy} & \multicolumn{2}{|c}{Relative efficiency}                \\
Range & Smooth & Time ($10^3$s) & Time ($10^3$s)             & Sample size & ESS           & Estimated   & Actual      \\ 
  \hline                                        
 0.5  & 1      & 32.0           & 8.0  (11.6)                & 196 (199)   & 196.0 (198.7) & 4.08 (3.28) & 4.00 (2.79) \\ 
 1    & 1      & 31.3           & 7.3  (9.8)                 & 200 (200)   & 199.9 (200.0) & 4.34 (4.31) & 4.35 (3.25) \\ 
 1    & 3      & 31.3           & 8.2  (11.2)                & 194 (198)   & 182.5 (196.5) & 3.77 (3.43) & 3.51 (2.79) \\ 
 3    & 1      & 31.2           & 7.7  (11.1)                & 194 (200)   & 189.9 (200.0) & 4.18 (3.56) & 3.89 (2.86) \\ 
 3    & 3      & 31.2           & 7.4  (11.0)                & 192 (199)   & 175.8 (199.0) & 4.43 (3.65) & 3.79 (2.87) \\ 
 5    & 3      & 31.3           & 8.3  (11.1)                & 200 (200)   & 200.0 (200.0) & 3.73 (3.49) & 3.85 (2.87)
\end{tabular}}
\caption{\label{tab:SErep}Simulation study on spatial extremes replicating \cite{Erhardt:2012}.  Each row represents the analysis of a simulated dataset under the given values of range $c$ and smoothness $\nu$ parameters (used in \eqref{eq:Whittle-Matern}). In each analysis a choice of $\epsilon$ was made under standard ABC so that the accepted sample size (and therefore ESS) was 200, and the same value was used for lazy ABC.  Lazy ABC figures are shown for both the standard $\hat{\gamma}$ estimate and, in brackets, the conservative estimate.  The lazy ABC output includes the pilot run as described in Section \ref{sec:combining}, and also includes the tuning time (roughly 120 seconds for the standard approach and 210 for the conservative).  Iterations were run in parallel and computation times are summed over all cores used.  For all datasets efficiency (ESS/time) to 1 significant figure was 0.006 for standard ABC and 0.02 or 0.03 under either approach to lazy ABC.}
\end{table*}

Table \ref{tab:SEbig} shows results for the $D=35$ dataset.  In the initial rejection sampling analysis lazy ABC improved efficiency by roughly 8 times.  For importance sampling the improvement factor is 2, showing that lazy ABC still improves efficiency, although this is harder when $g$ concentrates on plausible choices of $\theta$.  For example, standard ABC now spends negligible time on TBM simulations.  As before lazy ABC estimates of $\pi(y_{\text{obs}})$ differed from those of standard ABC by no more than 4\%.  In both cases a post-hoc selection of $\epsilon$ has been used successfully.

Table \ref{tab:SEbig} shows that under rejection sampling the standard tuning ESS is considerably smaller than the sample size.  This is due to two simulations which are given importance weights of 10.  These have $\hat{\gamma}$ values of roughly $10^{-4}$ which appears to be an underestimate.  Conservative tuning avoids large importance weights to give an ESS of roughly 200 and improves the relative efficiency for the same final choice of $\epsilon$.  For importance sampling conservative tuning also performs better.  The reason here is not obvious but may be a better final selection of $L$.

\begin{table*}[pht]
\centering
\fbox{ \footnotesize
\begin{tabular}{l|c|ccc|ccc|c}
                      &            & \multicolumn{3}{|c|}{Standard} & \multicolumn{3}{|c|}{Lazy} & Relative                                                        \\
                      & $\epsilon$ & Time ($10^3$s)                 & Sample size                & ESS  & Time ($10^3$s)        & Sample size & ESS   & efficiency \\
\hline                                        
RS standard           & 2.61       & 241.4                          & 210                        & 210  & 22.8                  & 200         & 139.6 & 7.0        \\
RS conservative       & 2.61       & 241.4                          & 207                        & 207  & 25.5                  & 200         & 197.7 & 9.0        \\
\hline
IS standard           & 2.33       & 136.4                          & 209                        & 168  & 61.9                  & 200         & 165   & 2.2        \\
IS conservative       & 2.33       & 136.4                          & 209                        & 168  & 51.0                  & 200         & 162   & 2.6        \\ 

\end{tabular}}
\caption{\label{tab:SEbig}Simulation study on a spatial extremes dataset with $D=35$.  Results are shown for rejection and importance sampling with standard and conservative tuning.  The rejection sampling output was used to create the importance density.  The final choice of $\epsilon$ is shown.  For IS the two $\epsilon$ values are equal but there is a small difference for RS.   The lazy ABC output includes the pilot run and the tuning time.}
\end{table*}

\section{Discussion} \label{sec:discussion}

This paper has introduced lazy ABC, a method to speed up inference by ABC importance sampling without introducing further approximations.  The approach is to abandon some unpromising simulations before they are complete.  By using a probabilistic stopping rule and weighting the accepted simulations accordingly, the algorithm targets exactly the same distribution as standard ABC, in the sense that Monte Carlo estimates of functions $h(\theta)$ and of the model evidence converge to unchanged values.

Results have been provided on the optimal tuning of the lazy ABC stopping rule and used to motivate a practical tuning method.  This has been demonstrated for a simple epidemiological example and a computationally challenging application where it has produced improvements in efficiency (ESS/CPU time) over standard ABC of up to 8 times.

The tuning method is based on estimating the optimal choice of $\alpha(\phi)$, \eqref{eq:tuning2}.  The most difficult part was estimating $\gamma(\phi) = \Pr \left(d(s(Y), s(y_{\text{obs}})) \leq \epsilon | \phi \right)$ from pilot run data.  Two approaches to this were described, a standard approach of direct estimation and a conservative approach of estimation using a larger $\epsilon$ value than is of interest for ABC.  The latter approach improves robustness and make estimation simpler at the cost of some inefficiency.  Both approaches performed well in the simulation studies but some improvements are desirable.  Firstly, estimation of $\gamma(\phi)$ often involves extrapolation which may produce inaccurate results.  Secondly, several choices by the user are required, especially for the standard approach.  A more automated approach would be useful for lazy versions of ABC SMC algorithms, where a new choice of $\alpha$ would be needed for each $\epsilon$ value,  or alternatively for lazy ABC algorithms which adapt $\alpha$ as more simulations become available.  It would be of interest to find suboptimal but robust choices of $\alpha$ addressing these issues.

A related point is that lazy ABC can be generalised to allow a non-uniform ABC kernel.
This may allow alternative approaches to tuning.

In some situations likelihood-based inference is possible, but calculating the likelihood or an unbiased estimate is expensive.
Generalising lazy ABC ideas to this situation is possible.
Appendix \ref{sec:LIS} describes this \emph{lazy importance sampling} algorithm and shows that the theoretical results of the paper carry over to it.
It also discusses why practical application seems more challenging than for lazy ABC.
Nonetheless this is an interesting topic for future research.

Lazy ABC with multiple stopping decisions is another extension to the framework of the main paper and is described in Appendix \ref{sec:multistop},
with an example of implementation in Section \ref{sec:extremes}.
A tuning method is given when the decision statistics for all stopping decisions are discrete, and also some cases where one decision statistic is continuous.  For more complex cases tuning results are not available.  For now it is recommended to discretise most decision statistics to avoid this difficulty.


This paper has concentrated on importance sampling, which is widely used by ABC practitioners, but the lazy ABC approach can be extended to ABC versions of MCMC and SMC, which are more efficient algorithms.  The tuning results are applicable to SMC algorithms, but further practical methods are needed, as mentioned above.  Further theory on optimal tuning is necessary for MCMC, although good performance may be possible with ad-hoc tuning.
It would also be of interest
to design algorithms in which the initial simulation stages can be resampled and continued many times.

\paragraph{Acknowledgements} 
Thanks to Chris Sherlock, Richard Everitt, Scott Sisson, Christian Robert and two anonymous referees for helpful suggestions, and Robert Erhardt for advice on the spatial extremes example.

\appendix

\section{Lazy importance sampling} \label{sec:LIS}

The approach of lazy ABC can be generalised to non-ABC situations to give \emph{lazy importance sampling} (LIS).  This is Algorithm \ref{alg:RW-IS} using a likelihood estimator of the form:
\[
\hat{L}_{\text{lazy}} = \begin{cases} 
  \hat{L} / \alpha(\theta, X) & \text{with probability } \alpha(\theta, X) \\
  0 & \text{otherwise}
\end{cases}
\]
In addition to condition C1 from Section \ref{sec:ESalgorithm} assume:
\begin{itemize}
\item[C4] The distribution $(X,\hat{L}) | \theta$ is such that $\hat{L} | \theta$ is a non-negative unbiased estimator of $L(\theta)$, and both $X | \theta$ and $\hat{L} | \theta, x$ can be simulated from.
\end{itemize}
This framework can be used when $\hat{L}$ is an expensive unbiased estimator.  It also allows cases where either or both of $X$ and $\hat{L}$ are non-random.  For example, $X$ may be a deterministic approximation of the likelihood and $\hat{L} | \theta$ may be a point mass at $L(\theta)$.

Close analogues of the lazy ABC results in this paper hold for LIS.
Firstly, a variant of Theorem \ref{thm:es} shows that given conditions C1 and C4, $\hat{L}_{\text{lazy}} | \theta$ is a non-negative unbiased estimator of $L(\theta)$.
The same proof can be used replacing $L_{\text{ABC}}(\theta)$ and $\hat{L}_{\text{ABC}}$ with $L(\theta)$ and $\hat{L}$.
It follows that LIS targets the posterior distribution.

Secondly, modified versions of equations \eqref{eq:tuning1} and \eqref{eq:tuning2} on the optimal choice of $\alpha$ hold as before.
In particular, condition C3 is still required for \eqref{eq:tuning2}.
The modification is that under LIS $\gamma(\phi) = \E[ \hat{L}^2 | \phi ]$,
whereas under lazy ABC $\gamma(\phi)$ can be written as $\E[ \hat{L}_{\text{ABC}} | \phi ]$.
The lazy ABC and LIS results are both proved by Theorem \ref{thm:tuning} in Appendix \ref{sec:alpha}.

Exploratory investigation suggests that tuning LIS is harder in practice than lazy ABC.
This is because $\E(\hat{L}^2 | \phi)$ can be strongly influenced by the upper tail of $\hat{L} | \phi$ which is hard to estimate from pilot run output.
Furthermore it is unclear whether the potential gains of LIS are comparable to lazy ABC.
Under ABC as $\epsilon \to 0$ the acceptance rate typically converges to zero, so for small $\epsilon$ there is scope to save time by early stopping since most iterations do not contribute to the final sample.
It is unclear whether this is the case under importance sampling with a good choice of $g(\theta)$.

Further work to investigate the usefulness of LIS as an inference method is required.
LIS is also included because it is a useful device for simplifying the proofs in the remaining appendices.

\section{Multiple stopping decisions} \label{sec:multistop}

The lazy ABC framework of Section \ref{sec:ESalgorithm} allows multiple stopping decisions, as follows.
As in that section assume $Y$ is a deterministic transformation of a latent vector $X_{1:p}$.
\begin{itemize}[leftmargin=*]
\item[] \emph{Example B1: Multiple stopping decisions} Let $X=X_{1:p}$ and $\alpha(\theta, x)=\prod_{i=1}^s \alpha^{(i)}(\theta, x_{1:t_i})$.  Thus, for each $1 \leq i \leq s$, once simulation of $X_{1:t_i}$ has been performed then $\hat{L}_{\text{lazy}}$ is set to zero with a certain probability, in which case no further simulation is necessary.  It is often be useful to let $\alpha^{(i)}(\theta, x_{1:t_i}) = \alpha^{(i)}(\phi_i(\theta,x_{1:t_i}))$.  That is, each stopping decision has associated decision statistics $\phi_i$.
\item[] \emph{Example B2: Multiple random stopping times} As for Example B1 but with each $t_i$ replaced with a random stopping time variable $T_i$.  This permits stopping to be considered when various random events occur, without imposing a fixed order of occurrence.
\end{itemize}
The following alternative characterisation of these examples is useful below.

\begin{lemma} \label{lem:multistop}
For any $1 \leq i \leq s$, Examples B1 and B2 can be represented as a lazy importance sampling algorithm with continuation probability $\alpha^{(i)}(\phi_i)$ and 
\[
\hat{L}=
\begin{cases}
\hat{L}_{\text{ABC}} / \beta_i(\theta, X) & \text{with probability } \beta_i(\theta,X) \\
0 & \text{otherwise},
\end{cases}
\]
where $\beta_i(\theta, x) = \prod_{j \neq i} \alpha^{(j)}(\phi_j)$.
\end{lemma}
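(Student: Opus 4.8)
The plan is to verify the claim by a direct definition-chasing argument: I exhibit the stated $\hat{L}$ as an admissible lazy importance sampling (LIS) estimator and check that wrapping it with continuation probability $\alpha^{(i)}(\phi_i)$ reproduces the law of the multiple-stopping estimator $\hat{L}_{\text{lazy}}$ of Example B1. The starting point is purely algebraic: the Example B1 continuation probability factorises as $\alpha(\theta,x) = \prod_{j=1}^s \alpha^{(j)}(\phi_j) = \alpha^{(i)}(\phi_i)\,\beta_i(\theta,x)$, where $\beta_i(\theta,x) = \prod_{j \neq i}\alpha^{(j)}(\phi_j)$. The idea is to assign the single factor $\alpha^{(i)}$ to the LIS continuation step and fold the remaining factor $\beta_i$ into the inner estimator $\hat{L}$.

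First I would confirm that the stated $\hat{L}$ is an admissible inner estimator for LIS by checking condition C4 with target $L_{\text{ABC}}(\theta)$. Non-negativity and simulatability of $X = X_{1:p}\mid\theta$ and of $\hat{L}\mid\theta,x$ are immediate from the framework assumptions, since $\hat{L}$ is a Bernoulli mixture whose mixing probability $\beta_i$ and value $\hat{L}_{\text{ABC}}/\beta_i$ are computable from $x$ (recall $Y$, and hence $\hat{L}_{\text{ABC}}$, is a deterministic function of $x$). For unbiasedness I would compute $\E(\hat{L}\mid\theta,x) = \hat{L}_{\text{ABC}}$ whenever $\beta_i(\theta,x)>0$, and handle the degenerate case exactly as in the proof of Theorem \ref{thm:es}: if $\beta_i(\theta,x)=0$ then some factor $\alpha^{(j)}=0$, so the full product $\alpha(\theta,x)=0$, whence condition C1 forces $\Pr(\hat{L}_{\text{ABC}}>0\mid\theta,x)=0$ and both sides vanish. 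Taking expectations over $X\mid\theta$ then gives $\E(\hat{L}\mid\theta)=L_{\text{ABC}}(\theta)$. The same C1 argument shows the LIS continuation probability $\alpha^{(i)}(\phi_i)$ is positive whenever $\Pr(\hat{L}>0\mid\theta,x)>0$, so C1 also holds in the LIS sense.

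Next I would verify the distributional identity. Applying the LIS wrapper to $\hat{L}$ gives a value that is non-zero only when both the $\alpha^{(i)}$ continuation succeeds (probability $\alpha^{(i)}$) and the inner $\hat{L}$ is non-zero (probability $\beta_i$); these two randomisations are independent coin flips, so the joint non-zero probability is $\alpha^{(i)}\beta_i = \alpha$, and on that event the realised value is $(\hat{L}_{\text{ABC}}/\beta_i)/\alpha^{(i)} = \hat{L}_{\text{ABC}}/\alpha$. This is exactly the law of the Example B1 estimator conditional on $(\theta,X)$, and integrating over $X\mid\theta$ shows the two algorithms produce the same estimator. For Example B2 the argument is unchanged after replacing each fixed index $t_i$ by the stopping time $T_i$, since $\phi_i$ and $\beta_i$ remain well-defined deterministic functions of $(\theta,X)$.

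The main obstacle is the conditioning bookkeeping: I must realise the product of continuation probabilities as genuinely independent Bernoulli trials, so that pulling out the $i$th factor into the wrapper and collapsing the others into $\hat{L}$ leaves the joint law intact, and I must carry the degenerate case $\beta_i=0$ (equivalently, any vanishing factor) through C1 so that unbiasedness survives the formal division by zero. Everything else is routine substitution.
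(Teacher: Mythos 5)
Your proposal is correct and takes essentially the same route as the paper, whose entire proof is the one-line assertion that the stated estimator ``can easily be verified to have the same distribution as $\hat{L}_{\text{lazy}}$'' --- the factorisation $\alpha(\theta,x)=\alpha^{(i)}(\phi_i)\,\beta_i(\theta,x)$ and the conditional two-point-law computation you perform (non-zero with probability $\alpha^{(i)}\beta_i=\alpha$, value $\hat{L}_{\text{ABC}}/\alpha$) is exactly that verification, carried out explicitly. Your extra checks (condition C4 for the inner estimator, the degenerate case $\beta_i(\theta,x)=0$ via C1, and the stopping-time remark for Example B2) are sound elaborations of details the paper leaves implicit.
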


\begin{proof}
The likelihood estimator stated can easily be verified to have the same distribution as $\hat{L}_{\text{lazy}}$.
\end{proof}

It is also helpful to define $\bar{T}_{2i}(\theta, \phi_{1:s})$ as the expected time remaining from the calculation of $\phi_i$ until the likelihood estimate is computed conditional on $\theta$ and $\phi_{1:s}$, and $\bar{T}_{2i}(\phi_i) = \E[\bar{T}_{2i}(\theta, \phi_{1:s}) | \phi_i]$.

\subsection{Tuning} \label{sec:multistoptuning}

The efficiency estimate of Section \ref{sec:effest} can be used in a multiple stopping decision setting given a choice of $\alpha$.  It is necessary to update the estimator $\hat{T}$ given there which is usually a straightforward task.  Sections \ref{sec:alldiscrete} and \ref{sec:onecont} describe situations of practical interest where the optimal form of $\alpha$ can be derived.  However in general the problem is challenging, as illustrated by Section \ref{sec:twocont}.

\subsubsection{Discrete decision statistics} \label{sec:alldiscrete}

Suppose $\alpha(\theta, x) = \prod_{i=1}^s \alpha^{(i)}(\phi_i)$ where $\phi_i(\theta, x)$ takes values in $\{1, 2, \ldots, d_i\}$ for $d_i$ finite.  Tuning requires selecting a finite number of $\alpha^{(i)}(\phi_i)$ values to optimise the efficiency estimate, which is possible by standard numerical optimisation methods.  However note that producing an efficiency estimate as in Section \ref{sec:effest} becomes difficult for large $s$.

\subsubsection{One continuous decision statistic} \label{sec:onecont}

Suppose $\alpha(\theta, x) = \prod_{i=1}^s \alpha^{(i)}(\phi_i)$ where $\phi_1(\theta, x)$ is continuous and $\phi_i(\theta, x)$ is as in Section \ref{sec:alldiscrete} for $i>1$.
Also suppose there exists $u_1(\phi_1)=\pi(\theta)/g(\theta)$, so that condition C3 holds.
Applying Lemma \ref{lem:multistop} and \eqref{eq:tuning2} with $\gamma(\phi) = \E[ \hat{L}^2 | \phi ]$, as justified in Appendix \ref{sec:LIS}, gives that efficiency is optimised by
\begin{equation} \label{eq:alpha1}
\alpha^{(1)}(\phi_1) = \min\left\{1, \lambda u_1(\phi_1) \left[ \frac{\gamma_1(\phi_1)}{\bar{T}_{21}(\phi_1)} \right]^{1/2} \right\},
\end{equation}
where $\gamma_i(\phi_1) = \E\left[\zeta(\beta_i(\theta, X)) \mathbbm{1} \left\{ d(s(Y),s(y_{\text{obs}}))\leq \epsilon \right\} | \phi_i \right]$, $\zeta(0)=0$ and $\zeta(x)=x^{-1}$ for $x>0$. 

In general $\gamma_1(\phi_1)$ and $\bar{T}_{21}(\phi_1)$ depend on $\alpha^{(i)}(\phi_i)$ for $i>2$ and so must be estimated several times during the tuning process which is costly.  A special case where this can be avoided is when $\phi_{2:p}$ is fully determined by $\phi_1$ (and so typically the decision associated with $\alpha_1$ is guaranteed to occur last).  For example this is the situation in the second simulation study of Section \ref{sec:simmethods}.


\subsubsection{Multiple continuous decision statistics} \label{sec:twocont}

Consider the setting of \ref{sec:onecont} with the modification that every $\phi_i(\theta, x)$ is continuous and there exists a corresponding function $u_i(\phi_i)=\pi(\theta)/g(\theta)$.  The same approach as above gives equations of the form
\[
\alpha^{(i)}(\phi_i) = \min\left\{1, \lambda_i u_i(\phi_i) \left[ \frac{\gamma_i(\phi_i)}{\bar{T}_{2i}(\phi_i)} \right]^{1/2} \right\},
\]
for $i=1,\ldots,s$.  The definition of $\gamma_i$ involves $\alpha^{(j)}$ for all $j \neq i$, and $\bar{T}_{2i}$ will also involve many of these terms.  Thus deriving the optimal $\alpha^{(i)}$ functions involves solving a complicated system of non-linear implicit equations.

\section{Tuning $\alpha$} \label{sec:alpha}

This appendix concerns tuning $\alpha$ to optimise the asymptotic efficiency of lazy importance sampling. Theorem \ref{thm:tuning} on this topic is stated and several earlier results are shown to be corrolaries. The proof is given in Appendix \ref{sec:alphaproof}.

Recall from Section \ref{sec:tuning theory} that efficiency is defined here as $N_{\text{eff}} / T$ where $T$ is the CPU time of the algorithm and $N_{\text{eff}} = N E(W)^2 / E(W^2)$. In the latter $W$ represents importance sampling weight and $N$ number of iterations. Assume that $T$ follows a central limit theorem in $N$.  Then the delta method gives that for large $N$ efficiency asymptotically equals
\begin{equation} \label{eq:asymptotic efficiency}
\frac{\E(W)^2 / \E(W^2)}{\E(T)/N}.
\end{equation}

\begin{theorem} \label{thm:tuning}
Fix some decision statistics $\phi(\theta,x)$.  Amongst continuation probability functions of the form $\alpha(\theta,x)=\alpha(\phi(\theta,x))$, asymptotic efficiency \eqref{eq:asymptotic efficiency} is maximised by the following expression for some $\lambda>0$,
\begin{equation} \label{eq:tuning}
\alpha(\phi) = \min\left\{1, \lambda \left[\frac{\E[\hat{L}^2 \tfrac{\pi(\theta)^2}{g(\theta)^2} | \phi]}{\bar{T}_2(\phi)}\right]^{1/2} \right\}.
\end{equation}
\end{theorem}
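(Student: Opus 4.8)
The plan is to recast the efficiency-maximisation as the minimisation of a product of two functionals of $\alpha$ and then solve it by a pointwise (calculus-of-variations) argument. Write $u=\pi(\theta)/g(\theta)$ for the importance ratio and $W=\hat{L}_{\text{lazy}}\,u$ for the weight. The first observation is that $\E(W)$ does not depend on $\alpha$: by the unbiasedness result (Theorem~\ref{thm:es} and its LIS analogue in Appendix~\ref{sec:LIS}), $\E(\hat{L}_{\text{lazy}}\mid\theta)=L(\theta)$ for every admissible $\alpha$, so $\E(W)=\E[L(\theta)u]=\pi(y_{\text{obs}})$ is fixed. Since $N$ is also fixed, maximising \eqref{eq:asymptotic efficiency} is equivalent to minimising $\E(W^2)\,\E(T)$ over choices of $\alpha(\phi)\in[0,1]$ subject to C1.

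Next I would evaluate the two factors. Conditioning on $(\theta,X)$ and using that the early-stopping coin is independent of $\hat{L}$ gives
\[
\E[\hat{L}_{\text{lazy}}^2\mid\theta,X]=\alpha(\phi)\cdot\frac{\E[\hat{L}^2\mid\theta,X]}{\alpha(\phi)^2}=\frac{\E[\hat{L}^2\mid\theta,X]}{\alpha(\phi)}.
\]
Multiplying by $u^2$ and taking the tower expectation over $\phi$ yields
\[
\E(W^2)=\E\!\left[\frac{v(\phi)}{\alpha(\phi)}\right],\qquad v(\phi)=\E\!\left[\hat{L}^2\tfrac{\pi(\theta)^2}{g(\theta)^2}\,\Big|\,\phi\right].
\]
For the runtime, the initial stage is always performed while the continuation stage is performed only when the coin succeeds, so $\E(T)/N=\bar{T}_1+\E[\alpha(\phi)\,\bar{T}_2(\phi)]$, using $\bar{T}_2(\phi)=\E[\bar{T}_2(\theta,\phi)\mid\phi]$. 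Thus the objective is $\E[v(\phi)/\alpha(\phi)]\bigl(\bar{T}_1+\E[\alpha(\phi)\bar{T}_2(\phi)]\bigr)$, a product of a term decreasing in $\alpha$ and a term increasing in $\alpha$.

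To minimise, I would argue that at an optimum $\alpha^*$, with attained values $A^*=\E[v/\alpha^*]$ and $B^*=\E(T)/N$, the function $\alpha^*$ must minimise $\E[v/\alpha]$ among all $\alpha$ giving the same runtime $B^*$ (otherwise the product could be decreased). This constrained subproblem is convex in $\alpha$, so the KKT conditions are sufficient: introducing a multiplier $\eta$ for $\E[\alpha\bar{T}_2]$ and differentiating $v(\phi)/\alpha(\phi)+\eta\,\alpha(\phi)\bar{T}_2(\phi)$ pointwise gives the stationarity condition $v(\phi)/\alpha(\phi)^2=\eta\,\bar{T}_2(\phi)$ wherever the box constraint is inactive, i.e.\ $\alpha(\phi)\propto[v(\phi)/\bar{T}_2(\phi)]^{1/2}$, and $\alpha(\phi)=1$ where this would exceed $1$. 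Writing $\lambda=\sqrt{B^*/A^*}$ (equivalently $\eta^{-1/2}$) for the resulting proportionality constant gives exactly
\[
\alpha(\phi)=\min\left\{1,\ \lambda\left[\frac{v(\phi)}{\bar{T}_2(\phi)}\right]^{1/2}\right\},
\]
which is \eqref{eq:tuning}. Equivalently one can read off the same form and constant by applying Cauchy--Schwarz to $\E[v/\alpha]\,\E[\alpha\bar{T}_2]$ on the region where the clipping is inactive.

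The main obstacle is the careful handling of the constraints and the rigour of the variational step rather than the algebra. Condition C1 forces $\alpha(\phi)>0$ wherever $v(\phi)>0$, which must be reconciled with the formula (it is, since the square-root term is then strictly positive), while points with $v(\phi)=0$ should be assigned $\alpha(\phi)=0$ to save time, consistent with the $\min$ expression. I would also need to confirm existence of a minimiser and that the stationary point is a genuine minimum; convexity of $\alpha\mapsto v/\alpha$ on $(0,1]$ together with linearity of the time constraint make the reduced subproblem well behaved. Finally, the scale freedom in $\alpha$ (which leaves the product invariant when $\bar{T}_1=0$ and no clipping occurs) is precisely what is broken by the constant $\bar{T}_1$ and the upper bound $\alpha\leq1$, and this is what pins down the optimal $\lambda$; verifying these points is where the work lies.
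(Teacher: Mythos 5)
Your proposal is correct and follows essentially the same route as the paper: you derive the same two expressions for $\E(W^2)$ and $\E(T)/N$ (the paper's \eqref{eq:W2} and \eqref{eq:ET}), reduce to minimising their product since $\E(W)$ is fixed, and recover \eqref{eq:tuning} by pointwise Lagrangian/KKT optimisation of $v(\phi)/\alpha(\phi)+\eta\,\alpha(\phi)\bar{T}_2(\phi)$ with clipping at $1$, which is exactly the paper's \eqref{eq:LM}. The only difference is bookkeeping: the paper connects the Lagrangian solutions back to the product problem via the auxiliary problems $P(\upsilon)$ and $P(\upsilon,\mu)$ and a continuity-in-$\lambda$ argument, whereas you argue necessity at an assumed optimum through the convex constrained subproblem (both presume existence of a minimiser, which you rightly flag), and your fixed-point identification $\lambda=\sqrt{B^*/A^*}$ is a correct implicit characterisation that the paper does not state.
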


Note that the expectation in the numerator is conditional on $\phi$, with $\theta$ and $x$, and $y$ in the lazy ABC case described below, treated as random.

If condition C3 holds then \eqref{eq:tuning} simplifies to \eqref{eq:tuning2} with $\gamma(\theta) = \E(\hat{L}^2 | \theta)$.
If $\phi(\theta,x)=(\theta,x)$ it simplifies further to \eqref{eq:tuning1}.
This proves the results stated in Appendix \ref{sec:LIS}.

To apply Theorem \ref{thm:tuning} to lazy ABC let $\hat{L} = \hat{L}_{\text{ABC}}$ (as defined in \eqref{eq:LhatABC}).
Then LIS is equivalent to the lazy ABC algorithm.
Note that since $\hat{L}_{\text{ABC}}$ is a Bernoulli random variable, $\hat{L}^2=\hat{L}_{\text{ABC}}$.
If condition C3 holds then \eqref{eq:tuning} simplifies to \eqref{eq:tuning2} with
$\gamma(\theta) = \E(\hat{L}_{\text{ABC}} | \theta) = \Pr \left(d(s(Y), s(y_{\text{obs}})) \leq \epsilon | \theta, x \right)$.
If $\phi(\theta,x)=(\theta,x)$ it simplifies further to \eqref{eq:tuning1}.
This proves the results stated in Section \ref{sec:tuning}.

\subsection{Proof of Theorem \ref{thm:tuning}} \label{sec:alphaproof}

In LIS the importance sampling weight $W$ equals $\frac{\hat{L} \pi(\theta)}{\alpha(\phi) g(\theta)}$ with probability $\alpha(\phi)$ and zero otherwise.
Hence:
\begin{equation} \label{eq:W2}
\E(W^2) = \int \frac{\E[\hat{L}^2 | \theta, \phi, y] \pi(\theta)^2}{\alpha(\phi) g(\theta)^2} \pi(\phi,y|\theta) g(\theta) d\theta d\phi dy
       = \int \frac{\xi(\phi)}{\alpha(\phi)} g(\phi) d\phi,
\end{equation}
where $\xi(\phi) = \E\left[ \hat{L}^2 \left( \tfrac{\pi(\theta)}{g(\theta)} \right)^2 \middle| \phi \right]$
and $g(\phi) = \int \pi(\phi|\theta) g(\theta) d \theta$.

The expected time of a single iteration of the LIS algorithm is
\begin{equation} \label{eq:ET}
\E(T)/N = \bar{T}_1 + \int \alpha(\phi) \bar{T}_2(\theta, \phi) \pi(\phi|\theta) g(\theta) d\theta d\phi
= \bar{T}_1 + \int \alpha(\phi) \bar{T}_2(\phi) g(\phi) d\phi.
\end{equation}

Note that $\E(W)$ is a constant, so choosing $\alpha(\phi)$ to maximise expression \eqref{eq:asymptotic efficiency} is equivalent to minimising $\E(W^2) \E(T)/N$.  Call this problem $P$.
Consider also the problems $P(\upsilon)$, minimising $\E(W^2)$ under the constraint $\E(T)/N = \upsilon$ and $P(\upsilon, \mu)$, minimising $\E(W^2) + \mu [E(T)/N - \upsilon]$, or equivalently 
\begin{equation} \label{eq:LM}
\int \left[ \frac{\xi(\phi)}{\alpha(\phi)}
+ \mu \alpha(\phi) \bar{T}_2(\phi) \right] g(\phi) d\phi.
\end{equation}
Note that $P(\upsilon, \mu)$ is a Lagrange multiplier form of $P(\upsilon)$.  Consider only $\mu>0$.  Also, let $\Upsilon$ be the set of $\E(T)/N$ values attainable by some choice of $\alpha$.

First consider minimising \eqref{eq:LM} subject to $0 \leq \alpha(\phi) \leq 1$.
This can be done by pointwise optimisation of the integrand.  With $\alpha$ unconstrained the solution is
\begin{equation} \label{eq:alphastar}
\alpha^*(\phi) = \lambda \left[\frac{\xi(\phi)}{\bar{T}_2(\phi)}\right]^{1/2},
\end{equation}
where $\lambda = \mu^{-1/2}$.  Also note that $\alpha^*(\phi)$ may sometimes be infinite.  The derivative of the integrand with respect to $\alpha$ is negative for $\alpha < \alpha^*$.  Hence if $\alpha^*(\phi)>1$, the constrained solution is $\alpha(\phi)=1$, giving the global solution \eqref{eq:tuning} from the theorem statement.

Substituting \eqref{eq:tuning} into \eqref{eq:W2} and \eqref{eq:ET} shows that the resulting values of $\E(W^2)$ and $\E(T)/N$ are continuous in $\lambda$.  Furthermore all $\E(T)/N$ values in $\Upsilon$ are attainable by \eqref{eq:tuning} under some choice of $\lambda$.  Hence given $\upsilon \in \Upsilon$ there is some $\mu^*$ for which the solution to $P(\upsilon, \mu^*)$ has $\E(T)/N=\upsilon$.  This must also be a solution to $P(\upsilon)$ since otherwise a superior choice of $\alpha$ for $P(\upsilon)$ is also superior for $P(\upsilon, \mu^*)$.  Now choose $\upsilon^*$ so that the solution to $P(\upsilon^*)$ minimises $\E(W^2) \E(T)/N$.  This must be a solution to $P$ since otherwise a superior choice of $\alpha$ for $P$ is superior to the solution already found for some $P(\nu)$.

\section{Tuning g} \label{sec:g}

This section proves Corollary \ref{cor:ABC-IS}.
First the RW-IS case is considered.
Recall that in this case $\gamma(\theta) = \E(\hat{L}^2 | \theta)$.

RW-IS can be seen as a special case of LIS where $\phi=\theta$, $\bar{T}_1(\theta)=0$ and $\bar{T}_2(\phi) = \bar{T}(\theta)$.
Repeating the working of Appendix \ref{sec:alphaproof} to optimise the choice of $\alpha(\theta) g(\theta)$ gives the unconstrained solution:
\begin{equation} \label{eq:opt alpha g}
\alpha(\theta) g(\theta) = \lambda \pi(\theta) \left[\frac{\gamma(\theta)}{\bar{T}(\theta)}\right]^{1/2}.
\end{equation}
Various choices of $\alpha$, such as $\alpha \equiv 1$, give a solution which also meets the constraint on $\alpha$.  These all give algorithms which are equivalent to RW-IS with $g(\theta) \propto \pi(\theta) \left[\dfrac{\gamma(\theta)}{\bar{T}(\theta)}\right]^{1/2}$ as claimed.

Finally, to prove the ABC-IS case, note that this is a special case of RW-IS with $\hat{L} = \hat{L}_{\text{ABC}}$ so the same result holds.
Since $\hat{L}_{\text{ABC}}$ is Bernoulli, $\gamma(\theta) = \E(\hat{L}^2 | \theta) = \E(\hat{L}_{\text{ABC}} | \theta)$ as claimed.

\bibliography{lazy}
\end{document}